\documentclass{llncs}
\usepackage[theorems=false,natbib=false]{ifiseriesreplacement}
\usepackage{subfig}
\usepackage{sidecap}
\bibliographystyle{plainurl}
\usepackage{localKit}

\usepackage{amsthm}
\usepackage{thmtools}
\declaretheorem[name=Proposition]{thmprop}
\declaretheorem[name=Lemma,sibling=thmprop]{thmlem}
\declaretheorem[name=Remark,sibling=thmprop]{thmrem}
\declaretheorem[name=Corollary,sibling=thmprop]{thmcor}
\declaretheorem[name=Theorem]{thmthm}
\hypersetup{%
  final=true,%
  colorlinks=true,%
  linkcolor=DarkSlateGrey,%
  citecolor=DarkSlateGrey,%
  urlcolor=DarkSlateGrey,%
}

\newcommand{\sep}{\mathrm{sep}}
\renewcommand{\rel}{\mathrm{rel}}
\newcommand{\Rel}{R}
\newcommand{\nE}{\complement E}
\newcommand{\lev}{\operatorname{lev}}
\newcommand{\Bi}{\mathsf{B}}
\newcommand{\Uni}{\mathsf{U}}
\newcommand{\ones}[1]{\lvert{#1}\rvert}
\newcommand{\nn}{\nu}
\renewcommand{\SC}{C}

\newcommand{\Emax}{E_{\max}}
\newcommand{\tV}{\widetilde{V}}
\newcommand{\tE}{\widetilde{E}}
\newcommand{\tG}{\widetilde{G}}
\renewcommand{\OPT}{\mathrm{OPT}}

\pgfdeclarelayer{background}
\pgfsetlayers{background,main}
\tikzset{%
vertex/.style={circle,fill=black!20,inner sep=0pt,minimum size=18pt},%
small vertex/.style={circle,fill=black!20,inner sep=0pt,minimum size=9pt},%
player/.style={circle,fill=black!20,inner sep=0pt,minimum size=18pt},%
marked player/.style={circle,draw=black,fill=white,inner sep=0pt,minimum size=17pt},%
dirlink/.style={draw,postaction={decorate,decoration=%
	{markings,mark=at position -.8pt with {\arrow[line width=1pt]{stealth'}}}}},%
subgraph/.style={draw,circle,decorate,decoration={coil,aspect=0},minimum size=2cm},%
every picture/.style={>=stealth',line width=.8pt},%
highlight/.style={draw,line width=5pt,-,black!30},%
on grid,%
auto,%
}

\raggedbottom

\begin{document}
\setlength{\abovedisplayshortskip}{-.5\baselineskip plus .5\baselineskip}
\setlength{\abovedisplayskip}{\medskipamount}
\setlength{\belowdisplayshortskip}{\medskipamount}
\setlength{\belowdisplayskip}{\medskipamount}

\title{The Price of Anarchy in Bilateral Network Formation in an Adversary Model}
\author{Lasse~Kliemann}
\institute{Christian-Albrechts-Universität zu Kiel\\
Institut für Informatik\\
Christian-Albrechts-Platz 4\\
24118 Kiel, Germany\\
\texttt{lki@informatik.uni-kiel.de}}
\maketitle

\begin{abstract}
  We study network formation with the bilateral link formation rule (Jackson and Wolinsky~1996)
  with $n$ \term{players} and \term{link cost} $\alpha>0$.
  After the network is built,
  an adversary randomly destroys one link according to a certain probability distribution.
  Cost for player $v$ incorporates 
  the expected number of players to which $v$ 
  will become disconnected.
  This model was previously studied for unilateral link formation (K.~2011).
  \par
  We prove existence of \term{pairwise Nash equilibria}
  under moderate assumptions on the adversary and $n\geq 9$.
  As the main result, we prove bounds on the \term{price of anarchy}
  for two special adversaries:
  one destroys a link chosen uniformly at random,
  while the other destroys a link 
  that causes a maximum number of player pairs to be separated.
  We prove bounds tight up to constants, 
	namely $O(1)$ for one adversary (if $\al>\frac{1}{2}$),
  and $\Theta(n)$ for the other
  (if $\al>2$ considered constant and $n \geq 9$).
  The latter is the worst that can happen
  for any adversary in this model (if $\al=\Om(1)$).
  \par\smallskip
  \textbf{Keywords:} network formation, bilateral link formation,
  pairwise Nash equilibrium, pairwise stability, price of anarchy, network robustness
\end{abstract}

\noindent\textbf{Bibliographic Note.}
This extended abstract has been accepted 
at the \textit{6th International Symposium on Algorithmic Game Theory} (SAGT 2013).

\section{Network Formation Games}
\label{sec:framework}
Network formation games are strategic games with a certain structure
based on players representing vertices in a graph;
we will sometimes call a player also a \enquote{vertex}.
The set of players is $V=\setn{n}\df\setft{1}{n}$, $n \geq 3$,
and the strategy space for each player is $\OI^n$.
A strategy profile $S=\fami{S_v}{v\in V} \in \OI^{n\times n}$ 
determines an undirected graph $G(S)=(V,E(S))$
by the application of a link formation rule.
This rule is a parameter of the game.
Two well-known rules are \term{unilateral link formation} (ULF)
and \term{bilateral link formation} (BLF),
the latter being the main topic of this work.
Under ULF, the built graph is $G(S)=G^\Uni(S)=(V,E^\Uni(S))$, where
\begin{equation*}
  E^\Uni(S)\df\setst{\set{v,w} \in \textstyle{{V \choose 2}}}{S_{vw}=1\lor S_{wv}=1}\period
\end{equation*}
Under BLF, the built graph is $G(S)=G^\Bi(S)=(V,E^\Bi(S))$, where
\begin{equation*}
  E^\Bi(S)\df\setst{\set{v,w} \in \textstyle{{V \choose 2}}}{S_{vw}=1\land S_{wv}=1}\period
\end{equation*}
We omit the \enquote{$\Uni$} and \enquote{$\Bi$} superscripts from our notation
if the link formation rule is clear from context or if we refer to no specific one.
\par
If $S_{vw}=1$ then this is interpreted as the request by player $v$ for a link to player $w$.
Under ULF, this request is enough to have the link built.
Under BLF, also the other player must request it, otherwise it is not built.
Entries $S_{vv}$ are of no concern.
Denote by $\nE(S) \df {V \choose 2} \setminus E(S)$ the links that are \emphasis{not} there.
To determine players' costs, we need two more parameters:
\term{link cost} $\al > 0$ and an \term{indirect cost function} $I_v$ for each player $v$,
defined on all undirected graphs on $V$.
Player $v$'s cost under strategy profile $S$ is then
\begin{equation*}
  C_v(S) \df \ones{S_v} \, \al + I_v(G(S)) \comma
\end{equation*}
where $\ones{S_v}$ denotes the number of $1$s in $S_v$, \ie $\ones{S_v} = \sum_{w\in V} S_{vw}$.
The first term, $\ones{S_v} \, \al$, is called $v$'s \term{building cost},
and $I_v(G(S))$ is called her \term{indirect cost};
we also write $I_v(S) \df I_v(G(S))$.
An example for $I_v$ is $I_v(G) = \sum_{w\in V} \dist_G(v,w)$~\cite{FLM+03}, 
the \term{sum-distance model}, but many others are conceivable.
Oftentimes, $I_v(G) = \infty$ if $G$ is disconnected.
We use \term{total cost} as the social cost, 
namely our \term{social cost} is $\SC(S) \df \sum_{v\in V} C_v(S)$.
For fixed parameters $n$,~$\al$, link formation rule, and indirect cost function,
we call $S$ \term{optimal} if it has minimum social cost among all strategy profiles.
Denote $\OPT$ that minimum social cost.
\par
Strategy profile $S$ is called a \term{Nash equilibrium} (NE) if
\begin{equation*}
  C_v(S) \leq C_v(S_{-v},X) \quad\quad \forall v \in V \quad \forall X \in \OI^n \comma
\end{equation*}
where as usual $(S_{-v},X)$ denotes the strategy profile resulting from $S$
by replacing $v$'s strategy by $X$.
So a NE is characterized by no player having an incentive to deviate from her current strategy,
assuming the strategies of the other players fixed.
For $v,w \in V$ denote $S+vw$ the strategy profile $S'$ with $S'_{vw} = 1$ 
and otherwise identical to $S$.
For $v,w \in V$ denote $S-vw$ the strategy profile $S'$ with $S'_{vw} = 0$ 
and otherwise identical to $S$.
Strategy profile $S$ is called a \term{pairwise Nash equilibrium} (PNE) if it is a NE 
and additionally
\begin{equation}
  \label{eqn:pne}%
  \small%
  C_v(S+vw+wv) \leq C_v(S) \implies C_w(S+vw+wv) > C_w(S)
  \quad\forall \set{v,w} \in \nE(S)
  \period
\end{equation}
So each missing link requires the additional justification that it would be an impairment
for at least one of its endpoints.
Strategy profile $S$ is called \term{pairwise stable} (PS) if \eqref{eqn:pne} holds and
\begin{equation}
  \label{eqn:ps}%
  \small%
  C_v(S-vw) \geq C_v(S)
  \quad\forall \set{v,w} \in E(S)
  \period
\end{equation}
So pairwise stability (PS) is only concerned with single-link deviations.
We call a strategy profile $S$ \term{essential} under ULF if $S_{vw}=1$ implies $S_{wv}=0$;
we call it \term{essential} under BLF if $S_{vw}=1$ implies $S_{wv}=1$.
Since a player has to pay $\al$ for each \emphasis{requested} link,
optima and equilibria of any of the three kinds (NE, PNE, and PS) are essential
and we will thus \emphasis{limit our studies to essential strategy profiles in the following}.
With BLF, an essential strategy profile $S$ is completely determined by the built graph $G(S)$,
and we call a graph $G$ PS if $G=G(S)$ for some PS strategy profile $S$.
\par
NE is well suited for ULF.
However, it is less well suited for BLF since the strategy profile $S=\mathbf{0}$,
\ie where no player issues any requests, is a NE under BLF;
indeed, due to the link formation rule, given $S=\mathbf{0}$ 
no player can make a change to her strategy that would have an effect on the graph built.
Even more, under BLF no player can unilaterally,
\ie by changing her strategy while strategies of all other players are maintained,
build a link.
Under ULF, links can be built unilaterally.
Removal of links can happen unilaterally with ULF and BLF.
We say that a player \term{removes} or \term{sells} a link present in the built graph
if she changes her strategy so that this link will be no longer part of the built graph.
Removing or selling a link will make her building cost smaller by the amount of~$\al$.
We say that players \term{build}, \term{add}, or \term{buy} a link not present in the built graph
if they change their strategies so that this link will be part of the built graph.
With BLF, this can only happen when the two endpoints of the link agree on it,
and then building cost for each of them increases by $\al$.
\par
Since this work is concerned with BLF, we will use PNE and PS.
Clearly, if $S$ is a PNE then $S$ is PS.
The converse holds if cost is convex
on the set of PS strategy profiles~\cite{CP05}.
Convexity of cost relates removal of multiple links
to removal of each of those links alone.
This addresses the difference between PNE and PS:
in the former, removal of multiple links has to be considered,
whereas the latter is only concerned with removal of single links.
Let $v\in V$ and $S$ a strategy profile.
We call $C_v$ \term{convex in $S$}
if for all $k\in\setn{n}$ and $\set{w_1,\hdots,w_k}\subseteq V$ we have
$C_v\parens{S-vw_1-\hdots-vw_k} - C_v(S) \geq \sum_{i=1}^k \parens{C_v(S-(v,w_i)) - C_v(S)}$,
or, equivalently,
\begin{equation}
  \label{eqn:def-convex-2}%
  I_v\Parens{S-vw_1-\hdots-vw_k} - I_v(S)
  \geq \sum_{i=1}^k \Parens{I_v(S-vw_i) - I_v(S)}\period
\end{equation}
% We call $C_v$ \term{convex} on a set of strategy profiles $\mathcal{S} \subseteq \OI^{n \times n}$,
% if it is convex in every $S\in\mathcal{S}$.
% We call $C_v$ \term{convex} if it is convex in each $S \in \OI^{n \times n}$.
We say that \term{cost is convex} if $C_v$ is convex for each $v$ 
and in each strategy profile~$S$.
\par
For fixed parameters $n$,~$\al$, link formation rule, indirect cost function,
and equilibrium concept (NE, PNE, PS), 
the \term{price of anarchy}~\cite{Pap01} is defined as
\begin{equation*}
  \max_{\text{$S$ is an equilibrium}} \, \frac{\SC(S)}{\OPT}\period
\end{equation*}

\section{Adversary Model}
An \term{adversary} is a mapping
assigning to each graph $G=(V,E)$
a probability measure $\P_G$ on the links $E$ of $G$.
Given a connected graph $G$,
the \term{relevance} of a link $e$ for a player $v$ is the number 
of vertices that can, starting at $v$, \emphasis{only} be reached via~$e$.
We denote the relevance of $e$ for $v$ by $\rel_G(e,v)$
and the sum of all relevances for a player 
by $\Rel_{G}(v)\df\sum_{e\in E} \rel_{G}(e,v)$.
If $P=(v,\hdots,w)$ is a path, then
denote $\rel(P,v) \df \sum_{e\in E(P)} \rel(e,v)$ and $\rel(P,w) \df \sum_{e\in E(P)} \rel(e,w)$
the sum of relevances along $P$ from the view of $v$ or $w$, respectively.
A link in a connected graph is called a \term{bridge}
if its removal destroys connectivity,
or equivalently, if it is no part of any cycle.
The relevance $\rel_G(e,v)$ is $0$ iff $e$ is \emphasis{not} a bridge.
Given a strategy profile $S$ where $G(S)$ is connected,
we define the \term{indirect cost} of player $v$ by
\begin{equation*}
  I_v(S) \df I_v(G(S)) \df
  \sum_{e\in E(S)} \rel_{G(S)}(e,v) \,\, \Pr[_{G(S)}]{\set{e}}\period
\end{equation*}
When $S$ is clear from context, 
we omit \enquote{$S$} and \enquote{$G(S)$} from the notation
and we also write $m \df \card{E(S)}$ for the number of links.
This indirect cost is the expected number of vertices
to which $v$ will lose connection when exactly one link is destroyed 
from $G(S)$ randomly and according to the probability measure given by the adversary.
For this indirect cost, we use the term \term{disconnection cost} in the following.
We define disconnection cost to be $\infty$ when $G(S)$ is not connected.
It is easily seen that if $S$ is an optimum, a NE under ULF,
a PNE under BLF, or PS under BLF, then $G(S)$ is connected.
\par
The \term{separation} $\sep(e)$ of a link $e$
is the number of ordered vertex pairs that will be separated, 
\ie pairs $(v,w)$ for which no $v$-$w$ path will exist anymore,
when $e$ is destroyed.
For a bridge $e$, 
denote $\nn(e)$ the number of vertices in the connected component of $G-e$
that has a minimum number of vertices;
note $\nn(e)\leq\sfloor{\frac{n}{2}}$.
If~$e$ is not a bridge, we define $\nn(e)\df0$.
Then $\sep(e) = 2 \nn(e) \, (n-\nn(e)) \leq n^2$
and also $\sep(e) = \sum_{v\in V} \rel(e,v)$.
If $e$ is a bridge, then $\sep(e) \geq 2 \, (n-1)$.
We can express the social cost now in terms of separation:
\begin{equation*}
  \SC(S)
  = 2 m \alpha + \sum_{v\in V} 
  \sum_{e\in E} \rel(e,v) \, \Pr{\set{e}} 
  = 2 m \alpha + \sum_{e\in E} \sep(e) \, \Pr{\set{e}}
  \period
\end{equation*}
Total building cost $2m\al$ as given here is right for BLF;
for ULF it would be $m\al$.
\par
We will consider two different adversaries.
One is called \term{simple-minded} and uses $\Pr{\set{e}} = \frac{1}{m}$ for each $e \in E$,
\ie it destroys one link uniformly at random.
The other adversary is called \term{smart} and chooses the link to destroy uniformly at random from 
$\Emax\df \setst{e\in E}{\sep(e)=\sep_{\max}}$, where $\sep_{\max}\df \max_{e\in E}\sep(e)$,
\ie it chooses uniformly at random from the set of those links
where each causes a maximum number of vertex pairs to be separated.
\par
Since optima are connected, they have social cost at least $2 \, (n-1) \, \al$.
A rough bound on the price of anarchy follows for any adversary:
\begin{equation}
  \label{eqn:rough}%
  \small%
  \frac{2 m \alpha + \sum_{e\in E} \sep(e) \, \Pr{\set{e}}}{2\,(n-1)\,\al}
  = O\parens{\frac{m \alpha + n^2 \sum_{e\in E} \Pr{\set{e}}}{n\al}}
  = O\PArens{\frac{m}{n} + \frac{n}{\al}} \period
\end{equation}

\section{Previous and Related Work}
Bilateral link formation follows a concept 
given by Myerson~\cite[p.~228]{Mye02} in a different context.
Jackson and Wolinsky~\cite{JW96} in 1996 introduced the equilibrium concept of pairwise stability
and discussed several variations of PS,
including what would later be known as PNE~\cite[p.~67]{JW96}.
Fabrikant, \etal~\cite{FLM+03} in 2003 initiated the quantitative study of 
the price of anarchy in a model that fits into the framework considered here,
as per \autoref{sec:framework}.
They considered ULF and the sum-distance model, \ie $I_v(G) = \sum_{w\in V} \dist_G(v,w)$.
Corbo and Parkes~\cite{CP05} in 2005 initiated the study of the price of anarchy under BLF
and discussed convexity;
the latter was also addressed by Calv\'{o}-Armengol and \.{I}lkili\c{c}~\cite{CI05}.
Since then, numerous results have been published;
we refer to~\cite[Sec.~4]{Kli11a} for a more comprehensive discussion.
The adversary model was invented in 2010~\cite{Kli10b},
and an $O(1)$ bound on the price of anarchy for ULF for both the simple-minded
and the smart adversary was published in 2011~\cite{Kli11a};
improved constants can be found in~\cite{Kli12a}.
Corresponding results for BLF were left to be established, 
and this is accomplished here.
\par
The adversary model addresses robustness.
This has been done before, \eg in the symmetric connections model~\cite{JW96,BS08a}
and extensions of it~\cite{BG00a,HS03,HS05}.
But all of those models show substantial differences to the adversary model,
the most prominent being that in the adversary model,
failures of different links are not independent events (but mutually exclusive events)
and that the adversary model allows the probability of failure
to depend on the built graph.
We again refer to~\cite[Sec.~4]{Kli11a} for a more comprehensive discussion.
An experimental robustness study of 
an extended version of the sum-distance model is given in~\cite{CFSK04}.
\par
Generally, independent link failures model 
the unavailability of links due to, \eg
deterioration, maintenance times,
or influences affecting the whole infrastructure or large parts of it
(\eg natural disasters).
Our adversary model, on the other hand,
models the situation when faced with an entity
that is malicious but only has limited means 
so that it can only destroy a limited number of links
-- the number being limited to~$1$ for now.

\paragraph{Our Contribution.}
We prove existence of PNE and PS graphs in the adversary model
under moderate assumptions on the adversary and $\al>0$ and $n \geq 9$.
As the main result, we prove bounds on the price of anarchy, which are tight up constants:
an $O(1)$ bound for the simple-minded adversary (if $\al > \frac{1}{2}$)
and a $\Theta\parens{1+\frac{n}{\al}}$ bound for the smart adversary (if $\al>2$), 
both times under BLF and for PNE as well as~PS.
Moreover, we prove that the simple-minded adversary induces convex cost 
and thus PNE and PS coincide.
The proof for the $O(1)$ bound uses a diameter argument,
similar to ULF, but more complicated.
The key idea is to show that long paths are \enquote{unique} in certain sense
(\autoref{prop:paths-limited} and \autoref{lem:sep-bound}) in a PS graph.

\paragraph{Open Problems.}
Extensions to other adversaries would be interesting,
\eg when $\Pr{\set{e}}$ is proportional to $\sep(e)$.
Allowing the adversary to destroy more than one link appears to provide a challenging task,
since our fundamental tool for bounding the price of anarchy,
the bridge tree (introduced in \autoref{sec:bounding-total-separation}),
appears inappropriate to capture relevant structure in that case.
\par
Convexity of cost is also left open for the smart adversary.
In~\cite{Kli12a}, non-convexity is shown, 
but only outside the set of PS graphs.
It is not known whether PS and PNE indeed diverge.
\par
The $O(1)$ bound on the price of anarchy is only claimed for $\al > \frac{1}{2}$.
In fact, it holds whenever we can strictly upper-bound the number of links in any equilibrium by $2n$,
which is the case when $\al > \frac{1}{2}$.
We also manage to go slightly below $\frac{1}{2}$, showing $O\parens{\frac{1}{\sqrt{\al}}}$ 
if $\al_0(n) \leq \al \leq \frac{1}{2}$, where $\al_0(n) \approx \frac{1}{8}$ for large $n$
(given more precisely later).
We have no matching lower bound at this time,
nor is the range of $\al < \al_0(n)$ well understood.
This should be considered in future work.

\paragraph{Technical Note.}
In order not having to introduce names for all occurring constants,
we use \enquote{$O$} and \enquote{$\Omega$} notation.
For our results, we use this notation in the following understanding:
we write
\enquote{$x=O(y)$}
if there exists a constant $c>0$ such that
$x \leq c y$.
The constant may only depend
on other constants and is in particular
independent of the non-constant quantities that constitute $x$ and $y$,
\eg parameters $n$ and $\al$.
We do not implicitly require that some quantities, \eg $n$,
have to be large.
Analogously, we write
\enquote{$x=\Omega(y)$}
if there exists a constant $c>0$ such that
$x \geq c y$.
Note that
\enquote{$O$} indicates an upper bound, 
making no statement about a lower bound;
while \enquote{$\Omega$} indicates a lower bound,
making no statement about an upper bound.
We write $x=\Theta(y)$ if $x=O(y)$ and $x=\Omega(y)$;
the constants used in the \enquote{$O$}
and the \enquote{$\Omega$} statement may be different, of course.
\par
The \enquote{$o$} notation is only used in one form,
namely $o(1)$ substituting a quantity that tends to $0$
when $n$ tends to infinity,
regardless whether other parameters are fixed or not.
Whenever we write \enquote{$o(1)$} in an expression,
it is meant as an upper bound, making no statement about a lower bound.

\section{Optima and Equilibria}
\label{sec:opt-eq}
Existence of equilibria (PNE and PS) and structural results for optima and equilibria 
under moderate assumptions on the adversary (namely inducing anonymous cost,
which is fulfilled by our simple-minded and smart one)
follow easily from what has been done for ULF in~\cite{Kli11a}.
We give the details in \autoref{app:opt-eq} (Appendix) and merely provide a summary here.
If $\al \leq n-1$ then the cycle is optimal with social cost $2n\al$.
If $\al \geq n-1$ then the star is optimal with social cost $2\,(n-1)\,(\al+1) \geq 2n\al$.
In any case, optimal social cost is $\Om(n\al)$; it is $\Theta(n\al)$ if $\al=\Om(1)$.
Star and cycle also occur as PNE for the two adversaries:
the star for $\al > 2-\frac{1}{n-1}$ and the cycle for $\al \leq \frac{1}{2} \sfloor{\frac{n-1}{2}}$.
If $n \geq 9$, then those regions overlap, so we have guaranteed existence 
of PNE for $n \geq 9$.

\section{Simple-Minded Adversary}
We have for the simple-minded adversary cost and social cost:
\begin{align}
  \nonumber
  C_v(S) & = \ones{S_v}\,\alpha + \frac{1}{m} \sum_{e\in E} \rel(e,v) 
  = \ones{S_v}\,\alpha + \frac{1}{m} \Rel(v) \\
  \label{eqn:sc}
  \SC(S) &= 2m\alpha + \frac{1}{m} \sum_{v\in V} \Rel(v) 
  = 2m\alpha + \frac{1}{m} \sum_{e\in E} \sep(e) 
\end{align}

\subsection{Convexity of Cost}
Convexity is interesting since it implies PNE and PS being equivalent.
The proof of the following proposition is given in \autoref{app:becoming-bridges} (Appendix).
\begin{thmprop}
  \label{prop:becoming-bridges}%
  Let $G=(V,E)$ be a connected graph,
  $v\in V$, $e=\set{v,w} \in E$, and $F\subseteq E\setminus\set{e}$ a set of links,
  each incident with $v$,
  so that $G'\df G-F-e$ is still connected.
  Let $B_1$ be those links that are non-bridges in $G$ but bridges in $G-e$.
  Let $B_2$ be those links that are non-bridges in $G-F$ but bridges in $G-F-e$.
  Then $B_1\subseteq B_2$.
\end{thmprop}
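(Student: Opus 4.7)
The plan is to find, for each $f\in B_1$, a bipartition $(A,B)$ of $V$ whose edges in $G$ are exactly $\set{e,f}$, and then to show that this same bipartition witnesses $f\in B_2$ inside $G-F$. The transfer is then essentially routine: because $e,f\notin F$, the same two edges remain the only crossing edges in $G-F$, and the connectivity of $G-F-e$ forces each side to be internally connected.

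Two preliminaries are worth dispatching first. First, $f\notin F$: if $f$ were in $F$, then $G-F-e$ would be a spanning subgraph of $G-e-f$, which is disconnected (since $f$ is a bridge in $G-e$), contradicting connectivity of $G-F-e$. Second, $e$ itself is a non-bridge in $G$: every cycle of $G$ through $f$ must use $e$ (else it would survive in $G-e$, contradicting $f$ being a bridge there), and $f\in B_1$ guarantees such a cycle exists. In particular $G-e$ is connected, so removing the bridge $f$ from it leaves exactly two components, whose vertex sets I take as $A$ and $B$. In $G-e$, $f$ is the unique edge crossing $(A,B)$, so in $G$ the crossing set lies in $\set{e,f}$; it cannot be $\set{f}$ alone, since $f$ is a non-bridge in $G$. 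Hence $\set{e,f}$ is exactly the set of edges of $G$ crossing $(A,B)$.

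Now I pass to $G-F$. With $e,f\notin F$, the crossing edges across $(A,B)$ in $G-F$ are still $\set{e,f}$. Since $G-F-e$ is connected and $f$ is its sole crossing edge there, the sides $A$ and $B$ are internally connected in $G-F-e$, and hence also in $G-F$. Two conclusions follow: removing $f$ from $G-F-e$ separates $A$ from $B$, so $f$ is a bridge in $G-F-e$; and in $G-F-f$ the edge $e$ still links the two internally-connected sides, so $f$ is a non-bridge in $G-F$. Together these place $f$ in $B_2$.

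No step in this plan is individually hard; the only subtlety is the bookkeeping that pins the crossing set as exactly $\set{e,f}$, and that is what I would be most careful to spell out. It is worth remarking that the hypothesis \enquote{$F$ consists of edges incident with $v$} plays no role in the argument---only $F\subseteq E\setminus\set{e}$ and the connectivity of $G-F-e$ are used.
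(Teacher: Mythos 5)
Your argument is correct, and the delicate points all check out: $f\notin F$ because $G-F-e$ is a spanning subgraph of $G-e-f$, which is disconnected; the cut $(A,B)$ given by the two components of $G-e-f$ has crossing set exactly $\{e,f\}$ in $G$ (no other edge can cross, and it cannot be $\{f\}$ alone since $f$ is a non-bridge of $G$); and since $f$ is the unique crossing edge of the connected graph $G-F-e$, each side is internally connected there, which yields both that $f$ is a bridge of $G-F-e$ and, via the crossing edge $e$, that $G-F-f$ is connected. The packaging is genuinely different from the paper's, though the structural core is shared: the paper isolates your two-component observation as \autoref{rem:one-cycle}, restated in cycle language (every link that turns from non-bridge into bridge upon removal of $e$ lies on every cycle through $e$), and then uses connectivity of $G'$ exactly once, to close a $v$--$w$ path in $G'$ into a cycle through $e$ that survives in $G-F$; this single cycle certifies that all of $B_1$ consists of non-bridges of $G-F$, while the bridge property in $G-F-e$ is immediate because an edge lying on no cycle of $G-e$ lies on no cycle of the subgraph $G-F-e$. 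You instead carry the cut itself over to $G-F$ and read off both halves of membership in $B_2$, one $f$ at a time. Your version is self-contained (no auxiliary remark, no cycle construction) at the cost of more per-edge bookkeeping; the paper's version handles all of $B_1$ simultaneously with one cycle and gets the bridge half by monotonicity. Your closing observation is also accurate: neither proof uses that the links in $F$ are incident with $v$; that hypothesis merely records how the proposition is invoked in the convexity proof of \autoref{lem:simple-minded-convex}, where the removed links are those bought by the deviating player $v$.
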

\begin{thmthm}
  \label{lem:simple-minded-convex}%
  The simple-minded adversary induces convex cost.
\end{thmthm}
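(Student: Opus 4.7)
The plan is to reduce convexity of $C_v$ to a single-step ``supermodular'' inequality by telescoping and then to verify that inequality using Proposition~\ref{prop:becoming-bridges}. Write $G\df G(S)$, $m\df |E(G)|$, and $F\df \set{e_1,\hdots,e_k}$ with $e_i=\set{v,w_i}$. By~\eqref{eqn:def-convex-2} it suffices to prove the convexity inequality for $I_v$ in place of $C_v$. If some $e_i$ is not a link of $G$, then by essentiality $S-vw_i=S$ and that term drops out; if $G-F$ is disconnected, the left-hand side is~$\infty$ and the inequality is trivial. I may therefore assume that every $e_i$ lies in $E(G)$ and that $G-F$ is connected; this also forces each $e_i$ to be a non-bridge of~$G$, since $G-e_i\supseteq G-F$ remains connected.

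Set $H_j \df G-\set{e_1,\hdots,e_j}$, so that $H_0=G$, $H_k=G-F$, every $H_j$ is connected, and each $e_j$ is a non-bridge in $H_{j-1}$. Summed over $j$ by telescoping, the target inequality reduces to the single-step dominance
\[
I_v(H_j)-I_v(H_{j-1}) \,\geq\, I_v(G-e_j)-I_v(G) \quad\text{for every } j.
\]
Writing $H\df H_{j-1}$, $e\df e_j$, and $F'\df E(G)\setminus E(H)$, this is an instance of the following general claim: whenever $G$ is connected, $F'$ is a set of non-bridges of $G$ incident to~$v$, $e=\set{v,w}\notin F'$ is a non-bridge of $G$, and $H\df G-F'$, one has $I_v(H-e)-I_v(H)\geq I_v(G-e)-I_v(G)$.

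To prove this claim I decompose
\[
I_v(H-e)-I_v(H) \,=\, \frac{D_H}{m_H-1} + \frac{\Rel_H(v)}{m_H(m_H-1)},
\]
with $m_H\df |E(H)|$ and $D_H\df \Rel_{H-e}(v)-\Rel_H(v)$, and analogously for~$G$. Three term-wise comparisons then suffice. First, $m_H\leq m_G$, so $1/(m_H-1)\geq 1/(m_G-1)$ and $1/(m_H(m_H-1))\geq 1/(m_G(m_G-1))$. Second, $\Rel_H(v)\geq\Rel_G(v)$: the subgraph inclusion $H\subseteq G$ gives $\rel_H(e',v)\geq\rel_G(e',v)$ for every $e'\in E(H)$ (removing more edges only enlarges the set of unreachable vertices), while the edges in $F'$ are non-bridges of~$G$ and contribute~$0$ to $\Rel_G(v)$. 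Third and crucially, $D_H\geq D_G$: a short case analysis shows that for a non-bridge $e$ in a connected graph~$\Gamma$, bridges of $\Gamma$ retain their relevance in $\Gamma-e$, and non-bridges of $\Gamma$ that remain non-bridges contribute~$0$; hence, writing $B(\Gamma,e)$ for the set of non-bridges of~$\Gamma$ that become bridges in $\Gamma-e$ (i.e., the set called $B_1$ in Proposition~\ref{prop:becoming-bridges}),
\[
D_\Gamma \,=\, \sum_{e'\in B(\Gamma,e)} \rel_{\Gamma-e}(e',v).
\]
Proposition~\ref{prop:becoming-bridges}, applied with $F'$ and $e$, yields $B(G,e)\subseteq B(H,e)$, and the elementary inequality $\rel_{H-e}(e',v)\geq\rel_{G-e}(e',v)$ follows from $H-e\subseteq G-e$; combined, these give $D_H\geq D_G$.

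The main obstacle is precisely the third comparison, $D_H\geq D_G$. A~priori, both the set of edges whose relevance shifts when $e$ is deleted and the individual relevances themselves may behave unpredictably under multiple simultaneous edge removals. Proposition~\ref{prop:becoming-bridges} supplies exactly the structural monotonicity of the newly-bridged set that is needed; the relevance comparison then reduces to a trivial subgraph inclusion. Summing the single-step dominance over $j=1,\hdots,k$ via telescoping delivers the full convexity inequality.
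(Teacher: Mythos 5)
Your proof is correct and follows essentially the same route as the paper: reduce to the single-step inequality $I_v(H_{j})-I_v(H_{j-1})\geq I_v(G-e_j)-I_v(G)$ (telescoping in place of the paper's induction) and establish it via \autoref{prop:becoming-bridges} together with monotonicity of relevances under edge removal. Your explicit decomposition into the $D/(m-1)$ and $\Rel/(m(m-1))$ terms just spells out the ``routine calculation'' that the paper's proof leaves implicit.
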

\begin{proof}
  Let $v\in V$ and $w_1,\hdots,w_k\in V$
  and $S$ be a strategy profile.
  We show~\eqref{eqn:def-convex-2} % \vpageref{eqn:def-convex-2}
  proceeding by induction on $k$.
  The case $k=1$ is clear.
  Let $k>1$ and set $S'\df S-vw_1-\hdots-vw_{k-1}$.
  We show that switching from $S'$ to $S'-vw_k$
  increases disconnection cost for $v$ at least as much as switching from $S$ to $S-vw_k$, \ie
  \begin{equation}
    \label{eqn:convexity-induction-step}%
    I_v(S'-vw_{k}) - I_v(S')
    \geq I_v(S-vw_k) - I_v(S)\period
  \end{equation}
  Denoting $R_1' \df R_{S'-vw_k}(v)$
  and $R_2' \df R_{S'}(v)$
  and $R_1 \df R_{S-vw_k}(v)$
  and $R_2 \df R_{S}(v)$,
  a routine calculation using $R_1' \geq R_1$ shows that to this end,
  it suffices to prove $R_1' - R_2' \geq R_1 - R_2$ 
  (have to consider dividing by the number of links).
  \par
  As removing $\set{v,w_k}$, relevance of zero or more links 
  changes from $0$ to a positive value;
  these are precisely those links which become bridges by the removal
  and which were no bridges before.
  No relevance is reduced by a removal.
  \par
  Let $B_1$ be all those links that become bridges by the switch from $S$ to $S-vw_k$,
  and let $B_2$ those that become bridges by the switch from $S'$ to $S'-vw_k$.
  Then $B_1\subseteq B_2$ by \autoref{prop:becoming-bridges}.
  The increase in relevance from $0$ to a positive value for $e\in B_1$
  given $S'$ is at least as high as when given $S$.
  In other words,
  while $\set{v,w_1},\hdots,\set{v,w_{k-1}}$ are removed,
  the effect of all links in $B_1$ becoming bridges is saved 
  until the removal of $\set{v,w_k}$.
  We have shown that $R_1' - R_2' \geq R_1 - R_2$ and thus~\eqref{eqn:convexity-induction-step}.
  \par
  The proof is concluded by the following standard calculation:
  \begin{align*}
    &\quad\: I_v(S-vw_1-\hdots-vw_k)-I_v(S)\\
    % &=I_v(S'-vw_k)-I_v(S)\\
    % &=I_v(S'-vw_k)-I_v(S)+I_v(S')-I_v(S')\\
    &=I_v(S'-vw_k)-I_v(S')+I_v(S')-I_v(S)\\
    &\geq I_v(S-vw_k)-I_v(S)+I_v(S')-I_v(S) & \text{by~\eqref{eqn:convexity-induction-step}}\\
    &\geq I_v(S-vw_k)-I_v(S)
    + \sum_{i=1}^{k-1} \Parens{I_v(S-vw_i) - I_v(S)} & \text{by induction} \\
    & = \sum_{i=1}^{k} \Parens{I_v(S-vw_i) - I_v(S)} \period
    \tag*{\qedhere}
  \end{align*}
\end{proof}

\subsection{Bounding Total Building Cost}
As a first step towards our bound on the price of anarchy,
we bound the number of links and hence total building cost in a PS graph.
A \term{chord} is a link between two vertices on the same cycle.
The following proposition was proved with $3n$ in~\cite{Kli11a} and improved to $2n$ in~\cite{Kli12a}.
\begin{thmprop}%[\cite{Kli12a}]
  \label{prop:chord-free-general}%
  A chord-free graph on $n$ vertices has strictly less than $2n$ links.
\end{thmprop}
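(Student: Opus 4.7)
My plan is to first bound the number of edges in a 2-connected chord-free graph and then extend to general chord-free graphs using the block decomposition.

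For the 2-connected case, let $H$ be a 2-connected chord-free graph on at least three vertices. Using Whitney's open ear decomposition, write $H = H_0 \cup P_1 \cup \cdots \cup P_r$, where $H_0$ is a cycle and each $P_i$ is a path with both endpoints in $H_{i-1}$ and internal vertices new to $H_{i-1}$; let $\ell_i$ denote the number of edges of $P_i$. The crucial point is that no ear can have $\ell_i = 1$: each $H_{i-1}$ is 2-connected (a standard feature of ear decompositions) and contains at least three vertices, so any two of its vertices lie on a common cycle $C \subseteq H_{i-1}$. A length-one ear would consist of a single new edge $uv$ with $u,v \in V(H_{i-1})$; picking such a cycle $C$ through $u$ and $v$, the edge $uv$ would be absent from $C$ (being absent from $H_{i-1}$), hence a chord of $C$ in $H$, contradicting chord-freeness. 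Thus $\ell_i \geq 2$ for every ear, so
\begin{equation*}
  \card{V(H)} \;=\; \card{V(H_0)} + \sum_{i=1}^{r}(\ell_i - 1) \;\geq\; 3 + r,
\end{equation*}
while $\card{E(H)} = \card{V(H_0)} + \sum_{i=1}^r \ell_i = \card{V(H)} + r \leq 2\,\card{V(H)} - 3$.

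For a general chord-free graph $G$ on $n$ vertices with at least one edge, I would decompose into blocks $B_1, \ldots, B_s$. Each block is either a bridge (with $\card{V(B_j)} = 2$ and $\card{E(B_j)} = 1$) or a 2-connected chord-free subgraph; in both cases the bound $\card{E(B_j)} \leq 2\,\card{V(B_j)} - 3$ applies. The standard block-cut-tree identity $\sum_{j=1}^s \card{V(B_j)} = n + s - 1$, combined with $\card{E(G)} = \sum_j \card{E(B_j)}$ (edges are partitioned by blocks), gives
\begin{equation*}
  \card{E(G)} \;\leq\; 2(n + s - 1) - 3s \;=\; 2n - s - 2 \;<\; 2n;
\end{equation*}
the edgeless case $\card{E(G)} = 0 < 2n$ is trivial.

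The only genuinely non-routine step is the exclusion of length-one ears, which is the sole place where the chord-free hypothesis enters; the rest is bookkeeping. I expect no further obstacles, provided one is willing to invoke Whitney's ear decomposition and the block-cut-tree identity as black boxes.
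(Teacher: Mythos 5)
Your proof is correct. One thing to be aware of: the paper does not prove this proposition at all --- it is imported from prior work (the bound $3n$ from \cite{Kli11a}, improved to $2n$ in \cite{Kli12a}), so there is no in-paper argument to compare against; your write-up is a self-contained replacement. The route you take --- split into blocks, and for each $2$-connected block use Whitney's open ear decomposition together with the observation that a length-one ear would be a chord of a cycle through its endpoints (such a cycle exists by $2$-connectivity, and the ear's edge is new, hence not on that cycle) --- is sound, and it even yields the slightly stronger bounds $\card{E(B)}\leq 2\card{V(B)}-3$ per block and $\card{E(G)}\leq 2n-s-2$ overall, of which the stated $<2n$ is a corollary. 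Your reading of \enquote{chord} (an edge whose endpoints lie on a common cycle not containing that edge) is the one the paper intends --- it is what makes \enquote{selling a chord preserves bridgelessness} true in \autoref{prop:number-links} and \autoref{rem:chord-free-smart} --- and your ear argument uses exactly that reading. Two pieces of routine bookkeeping you should make explicit: the identity $\sum_{j}\card{V(B_j)}=n+s-1$ holds for \emph{connected} graphs, so for general $G$ apply your bound to each connected component separately (isolated vertices contribute no edges) and sum, which still gives strictly fewer than $2n$ links; and blocks on two vertices are exactly the bridges, for which $1=2\cdot 2-3$ makes your per-block bound go through uniformly. Neither point is a gap, just a line each to add.
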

\begin{thmprop}
  \label{prop:number-links}%
  Let a pairwise stable graph $G$ be given.
  \begin{enumerate}[label=(\roman*)]
    \item\label{prop:chord-free-bilateral-simple:i}%
      If $\alpha>\frac{1}{2}$, then $G$ is chord-free
      and hence has $<2n=O(n)$ links.
    \item\label{prop:chord-free-bilateral-simple:ii}%
      In general, $G$ is chord-free
      (with $O(n)$ links)
      or has at most $\frac{n}{\sqrt{2\alpha}}+1$ links.
  \end{enumerate}
\end{thmprop}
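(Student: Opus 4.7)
The strategy is to combine the PS single-link-removal condition for a chord with a key structural observation: removing a chord does not change the bridge set. Suppose $G$ contains a chord $e = \{v, w\}$. By definition, $v$ and $w$ lie on a cycle $C$ in $G - e$, so $e$ is a non-bridge (hence $\rel_G(e,v) = 0$) and $G - e$ is connected. I would first prove that for every edge $e' \neq e$, $e'$ is a bridge of $G$ iff it is a bridge of $G - e$. The forward direction is immediate; for the converse, if $e'$ were a bridge of $G - e$ but not of $G$, then $v$ and $w$ would lie in different components of $G - e - e'$ (since the edge $e$ connects them in $G - e'$). But $C$ provides two edge-disjoint $v$-$w$ paths in $G - e$, at least one of which avoids $e'$, keeping $v$ and $w$ connected in $G - e - e'$ --- a contradiction. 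Hence $R_{G-e}(v) = R_G(v) =: R_v$.

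The PS condition for $v$ (who must not benefit from selling $e$) then reads $\alpha \leq I_v(S - vw) - I_v(S) = R_v/(m-1) - R_v/m = R_v/(m(m-1))$, i.e., $\alpha\, m(m-1) \leq R_v$. Via the bridge-tree interpretation $R_v = \sum_u d_T(v, u)$, where $T$ is the bridge tree of $G$ and $d_T$ denotes super-node distance, the sum is maximized when $T$ is a path with $v$ at one end, giving $R_v \leq 1 + 2 + \cdots + (n-1) = n(n-1)/2 < n^2/2$. Hence $m(m-1) < n^2/(2\alpha)$, and a standard quadratic estimate yields $m \leq n/\sqrt{2\alpha} + 1$, establishing claim~\ref{prop:chord-free-bilateral-simple:ii}.

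Claim~\ref{prop:chord-free-bilateral-simple:i} follows from~(ii): if $\alpha > 1/2$, then $n/\sqrt{2\alpha} + 1 < n + 1$, so~(ii) forces either $G$ chord-free or $m \leq n$. But any connected graph containing a chord has $m \geq n + 1$, since $G - e$ already contains the cycle $C$ and thus at least $n$ edges. This excludes the second alternative, so $G$ must be chord-free, and Proposition~\ref{prop:chord-free-general} yields $m < 2n$. The main technical obstacle is the chord-removal bridge-preservation lemma; once this structural fact is in hand, the quantitative bound falls out from a single application of the PS inequality to the chord, and everything else is routine.
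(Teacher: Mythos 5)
Your proposal is correct and follows essentially the same route as the paper: the pairwise-stability (removal) condition applied to an endpoint of a chord gives $\alpha \le \Rel(v)/\bigl(m\,(m-1)\bigr)$ -- using that selling a chord leaves all relevances unchanged, a fact the paper uses implicitly and you prove explicitly -- combined with $\Rel(v)\le \frac{n\,(n-1)}{2}$ and the same quadratic estimate. The only cosmetic difference is that you deduce part~(i) from part~(ii) via the count $m\ge n+1$ for a connected graph containing a chord, whereas the paper argues~(i) directly by bounding the impairment of selling the chord by $\frac{1}{2}<\alpha$.
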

\begin{proof}
  If $G$ is bridgeless, selling a chord is beneficial since
  disconnection cost $0$ is maintained.
  So for both parts we assume that $G$ contains bridges.
  \par
  \ref{prop:chord-free-bilateral-simple:i}
  The impairment in disconnection cost for a player $v$ of selling a chord 
  is only due to the change in the denominator of the disconnection cost
  and is precisely $\frac{1}{m\,(m-1)}\,\Rel(v)$,
  which is upper-bounded by $\frac{1}{2}$
  since $\Rel(v)\leq\frac{n\,(n-1)}{2}$~\cite[Prop.~8.3]{Kli11a}.
  Hence if $\alpha$ is larger than that,
  there is an incentive to sell the chord.
  \par
  \ref{prop:chord-free-bilateral-simple:ii}
  Let $G$ possess a chord.
  This means that any of its two endpoints, say $v$, 
  deems it being no impairment
  to pay $\alpha$ for this link,
  hence $\frac{1}{m\,(m-1)}\,\Rel(v) \geq \alpha$.
  It follows 
  \begin{equation*}
    \frac{n^2}{2} \geq \frac{n\,(n-1)}{2} \geq \Rel(v) 
    \geq m\,(m-1)\,\alpha \geq (m-1)^2\,\alpha
    \enspace,
  \end{equation*}
  hence $\frac{n}{\sqrt{2\alpha}}+1\geq m$.
\end{proof}

\subsection{Bounding Total Separation}
\label{sec:bounding-total-separation}
Since the adversary only destroys one link,
it can only cause damage if it chooses a bridge.
Therefore, the bridge structure of the built graph is important.
Let $G=(V,E)$ be any connected graph.
We call $K \subseteq V$ a \term{bridgeless connected component}
or just \term{component} if it is inclusion-maximal
under the condition that the induced subgraph $G[K]$ does not contain any bridges of $G[K]$,
or equivalently does not contain any bridges of $G$.
It is easy to see that the set of all components forms a partition of $V$.
The \term{bridge tree} $\tG=(\tV,\tE)$ of $G$ is the following graph:
\begin{align*}
  \tV&\df\setst{K\subseteq V}{\text{$K$ is a component}}\\
  \tE&\df \setst{\set{K,K'} \in \textstyle{\tV \choose 2}}%
  {\exists v\in K, w\in K' \holds \set{v,w}\in E}
\end{align*}
Another way of thinking of the bridge tree is that we successively contract each cycle
to a new vertex which is adjacent to all vertices that had a neighbor in the contracted cycle. 
If for each $v\in V$ we denote $\kappa(v)$ the unique component with $v \in \kappa(v)$, then
it is easy to see that $\set{v,w} \mapsto \set{\kappa(v), \kappa(w)}$
is a bijection between the bridges of $G$ and the links of $\tG$.
In the game theoretic situation, we look at the bridge tree of the built graph $G(S)$.
When considering the effect of building additional links,
we may treat vertices of the bridge tree as players.
This is justified since links inside components have $0$ relevance.
Hence for a strategy profile $S$ and $\set{K,K'} \in {\tV \choose 2}$ 
the effect in disconnection cost of a new link between a player from $K$
and a player from $K'$ is specific to the pair $\set{K,K'}$
and not to the particular players.
Note also that a link $e=\set{v,w}$ with $v,w$ in the same component
not only has $\rel(e,x)=0$ for all $x\in V$, but also $\sep(e)=0$.
\par
Knowing this \emphasis{we will only work with the bridge tree during the rest of this section},
so all vertices, links, and paths are in the bridge tree.
We also make the convention that
whenever we speak of the number of vertices in a subgraph $T$ of the bridge tree,
we count $\card{v}$ for each vertex $v\in V(T)$,
\ie we count the vertices that are in the respective components
(illustration in \autoref{app:bridge-tree}, Appendix).
We call a bridge tree PS if it stems from a PS graph.
\par
The proof of the following remark is straightforward:
\begin{thmrem}
  \label{rem:bypass}%
  Let $P=(v,\hdots,w)$ be a path (reminder: in the bridge tree).
  The benefit in disconnection cost of \term{bypassing} $P$ can be lower-bounded:
  \begin{equation*}
	I_v(G) - I_v(G+\set{v,w}) \geq \frac{\sum_{e\in E(P)} \rel(e,v)}{m+1} =  \frac{\rel(P,v)}{m+1} \period
  \end{equation*}
\end{thmrem}
The following bound on total separation holds for all connected graphs:
\begin{thmlem}[\protect{\cite[Cor.~8.12]{Kli11a}}]
  \label{lem:diam-bound}
  For each $v$ we have $R(v) \leq (n-1) \, \diam(\tG)$, hence
  \begin{equation*}
	\sum_{e\in E} \sep(e) < n^2 \diam(\tG) \period
  \end{equation*}
\end{thmlem}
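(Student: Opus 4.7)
The plan is to switch from summing over links to summing over vertices via a double-counting identity, after which a diameter bound becomes immediate.

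First I would observe that since $\rel(e,v)=0$ for any non-bridge $e$, the sum $R(v)=\sum_{e\in E}\rel(e,v)$ reduces to a sum over bridges of $G$, i.e. over edges of the bridge tree $\tG$. For a bridge $e$, the quantity $\rel(e,v)$ counts the number of vertices $w\neq v$ that lie in the connected component of $G-e$ not containing $v$, or equivalently, such that $e$ lies on every $v$-$w$ path in $G$. Swapping the order of summation, I would write
\begin{equation*}
R(v)\;=\;\sum_{w\in V\setminus\set{v}}\,\card{\setst{e\in E}{\text{$e$ is a bridge on every $v$-$w$ path}}}\period
\end{equation*}

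The key observation is that the number of bridges on every $v$-$w$ path equals the distance $\dist_{\tG}(\kappa(v),\kappa(w))$ in the bridge tree: under the bijection between bridges of $G$ and edges of $\tG$, a bridge $e$ separates $v$ from $w$ in $G$ iff the corresponding tree edge lies on the unique $\kappa(v)$-$\kappa(w)$ path in $\tG$. Hence $R(v)=\sum_{w\neq v}\dist_{\tG}(\kappa(v),\kappa(w))$, and bounding each distance by $\diam(\tG)$ while noting there are only $n-1$ terms yields $R(v)\leq(n-1)\diam(\tG)$.

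For the second inequality, I would combine $\sep(e)=\sum_{v\in V}\rel(e,v)$ (stated in the paper) with the first part:
\begin{equation*}
\sum_{e\in E}\sep(e)\;=\;\sum_{v\in V}R(v)\;\leq\;n(n-1)\diam(\tG)\;<\;n^2\diam(\tG)\period
\end{equation*}
There is no real obstacle here; the only subtle point is the identification of $R(v)$ with the sum of bridge-tree distances, which relies on the fact that a bridge separates $v$ from $w$ precisely when it lies on the (unique, at the bridge-tree level) path between them, but this is a direct consequence of the definition of bridges.
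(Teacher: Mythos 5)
Your argument is correct. Note that the paper itself does not prove this lemma at all; it imports it verbatim as \cite[Cor.~8.12]{Kli11a}, so there is no in-paper proof to compare against. Your double-counting derivation is a valid self-contained proof: the exchange of summation giving $R(v)=\sum_{w\neq v}\card{\setst{e}{\text{$e$ is a bridge separating $v$ from $w$}}}$, the identification of that cardinality with $\dist_{\tG}(\kappa(v),\kappa(w))$ via the bridge/tree-edge bijection stated in the paper, and the final step $\sum_{e\in E}\sep(e)=\sum_{v\in V}R(v)\leq n\,(n-1)\,\diam(\tG)$ are all sound; in fact you obtain the slightly sharper bound $n\,(n-1)\,\diam(\tG)$. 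The only pedantic caveat concerns the strictness of the final inequality: if $G$ is bridgeless, then $\diam(\tG)=0$ and both sides vanish, so the strict \enquote{$<$} in the statement (not in your argument) degenerates; for $\diam(\tG)\geq 1$ your chain gives strictness since $n\,(n-1)<n^2$. This is an artifact of how the lemma is phrased and does not affect any of its uses in the paper.
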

Since $m \geq n-1$, it follows that total disconnection cost is bounded by 
\begin{equation*}
  \frac{n^2}{m} \diam(\tG) = n \diam(\tG) \, (1+ o(1)) \comma
\end{equation*}
where $o(1)$ is for $n \conv \infty$.
Under ULF, if $S$ is a NE then $\diam(\tG(S)) = O(\al)$~\cite[Lem.~8.13]{Kli11a},
by which an $O(1)$ bound is obtained.
If we could prove the same for BLF and PS, we would be done.
However, an example shows that there is no hope for this;
the diameter of a PS bridge tree can be $\Om(\sqrt{n})$ (\autoref{app:example-diameter}, Appendix).
The trick to handle this situation is to prove that essentially, 
there is only \emphasis{one} such long path.
Note that in non-PS graphs, total separation can be $\Om(n^3)$,
\eg if the graph is a path.
% \eg if the graph is a path then total separation is $\frac{1}{3}\,(n^3-n)$.
% A generalized star with $k$ rays of length $\ell$ has total separation $\Om(\ell n^2)$,
% which is $\Om(n^{2+\eps}$ for $\ell=\Theta(n^{\eps})$.
\par
For a moment, orient the links of the bridge tree according to the rule
that $e=\set{v,w}$ is oriented $(v,w)$ if $\rel(e,v) \leq \rel(e,w)$,
\ie links point in the direction of fewer vertices, ties broken arbitrarily.
Due to cycle-freeness, there is a vertex $u$ having only out-links;
and since each vertex has at most one in-link, this vertex $u$ is unique.
% Then each vertex has at most one in-link.
% Due to this and to cycle-freeness, there is a unique vertex $u$ having only out-links.
We consider the bridge tree rooted at $u$
and discard the orientation, it has served its purpose.
Denote $\lev(v)$ the level of $v$ in the rooted bridge tree, 
in particular $\lev(u)=0$ and $\lev(v)=1$ for $v \in N(u)$.
Each of the subtrees rooted at some $v \in N(u)$ is called a \term{branch} of the bridge tree.
\par
A path $P=(v,\hdots,w)$ is called \term{$v$-limited} if $\rel(P,v) \leq 2 n \al$.
It is called \term{$w$-limited} if $\rel(P,w) \leq 2 n \al$.
A path is called a \term{branch path} if it traverses each level at most once,
\ie it runs straight up or down and in particular stays within one branch plus $u$.
A branch path $P=(v,\hdots,w)$ with $\lev(v) < \lev(w)$ 
is called \term{outward-limited} if it is $v$-limited
and is called \term{inward-limited} if it is $w$-limited.
Denote the constant $c \df 4$
and also $\al_0(n) \df \frac{1}{8} \Parens{1+\frac{1}{n-1}}^2 \leq \frac{9}{32}$.
We have:
\begin{thmprop}
  \label{prop:paths-limited}
	Let $\al \geq \al_0(n)$.
  \begin{enumerate}[label=(\roman*)]
    \item\label{prop:paths-limited:i}
      In a PS bridge tree, each path $P=(v,\hdots,w)$ is $v$-limited or $w$-limited (or both).
      In particular, each branch path is inward-limited or outward-limited (or both).
    \item\label{prop:paths-limited:ii}
      An inward-limited path has length at most $c\al=O(\al)$.
    \item\label{prop:paths-limited:iii}
      In a PS bridge tree,
      no more than one branch contains a path not being inward-limited.
  \end{enumerate}
\end{thmprop}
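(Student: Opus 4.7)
The common engine is Remark~\ref{rem:bypass}: for any bridge-tree path $P = (v, \ldots, w)$ with $\{v, w\}$ not already in the graph, player $v$ strictly profits from building the bypass $\{v, w\}$ as soon as $\rel(P, v)/(m+1) > \al$, and analogously for $w$. The quantitative input I plan to use is that under $\al \geq \al_0(n)$ every PS graph satisfies $m+1 \leq 2n$: Proposition~\ref{prop:number-links}\ref{prop:chord-free-bilateral-simple:ii} either gives chord-freeness (hence $m < 2n$) or the bound $m \leq n/\sqrt{2\al}+1$, and the constant $\al_0(n) = \frac{1}{8}(1+\frac{1}{n-1})^2$ is tuned precisely so that the latter bound is $\leq 2n-1$. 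Consequently $\rel(P, x) > 2n\al$ already forces $x$'s bypass benefit to strictly exceed $\al$.

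For~\ref{prop:paths-limited:i}, assume towards contradiction that $P = (v, \ldots, w)$ is neither $v$-limited nor $w$-limited. Because the bridge tree is a tree, $\{v, w\}$ is a missing link, and the opening observation gives strict bypass benefit exceeding $\al$ for both $v$ and $w$, contradicting the PS condition~\eqref{eqn:pne}. The ``in particular'' clause is then just a notational unpacking: on a branch path with $\lev(v) < \lev(w)$, being $v$-limited is outward-limited and being $w$-limited is inward-limited.

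Part~\ref{prop:paths-limited:ii} is where the rooted orientation enters. Label the branch-path vertices from shallow to deep as $v = u_0, u_1, \ldots, u_\ell = w$. Because $u$ is the unique vertex without an in-link and every other vertex has exactly one (from its parent), every edge $e_i = \{u_{i-1}, u_i\}$ must be oriented from parent $u_{i-1}$ to child $u_i$. The orientation rule $\rel(e_i, u_{i-1}) \leq \rel(e_i, u_i)$ together with $\rel(e_i, u_{i-1}) + \rel(e_i, u_i) = n$ then forces the parent side to contain at least $n/2$ vertices; since $w$ lies in the subtree of $u_i$, this parent side is exactly what $\rel(e_i, w)$ counts. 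Summing over $i$ gives $\rel(P, w) \geq \ell \cdot n/2$, and the inward-limited bound $\rel(P, w) \leq 2n\al$ then yields $\ell \leq 4\al = c\al$.

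For~\ref{prop:paths-limited:iii}, suppose for contradiction that two different branches each contain a non-inward-limited path $P_j = (v_j, \ldots, w_j)$ for $j \in \{1,2\}$, with $w_j$ the deep endpoint, so $\rel(P_j, w_j) > 2n\al$. The unique bridge-tree path $Q$ from $w_1$ to $w_2$ ascends through $u$ and descends into the other branch, and since each $v_j$ is an ancestor of $w_j$, each $P_j$ sits inside $Q$ as an edge-subset. Hence $\rel(Q, w_j) \geq \rel(P_j, w_j) > 2n\al$ for both $j$; applying the opening bypass argument to $Q$ (which has length $\geq 4$, so $\{w_1, w_2\}$ is missing) strictly benefits both $w_1$ and $w_2$, contradicting PS. I expect the main conceptual step to be the orientation argument of part~\ref{prop:paths-limited:ii}: once every downward edge is seen to contribute at least $n/2$ to the deep-endpoint relevance, parts~\ref{prop:paths-limited:i} and~\ref{prop:paths-limited:iii} reduce to routine applications of the bypass inequality combined with the link-count bound.
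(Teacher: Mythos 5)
Your proposal is correct and takes essentially the same route as the paper's proof: part~\ref{prop:paths-limited:i} via the bypass benefit of \autoref{rem:bypass} combined with the link bound of \autoref{prop:number-links} (your unified observation that $\al\ge\al_0(n)$ forces $m+1\le 2n$ is exactly the computation the paper performs case-wise), and part~\ref{prop:paths-limited:iii} by concatenating the two offending paths through the root $u$ and invoking~\ref{prop:paths-limited:i}. For part~\ref{prop:paths-limited:ii} you argue edge-by-edge that the rooted orientation puts at least $n/2$ vertices on the $w$-far side of each link of the branch path, whereas the paper gets the same bound from the fact that the whole branch contains at most $n/2$ vertices ($N\ge n/2$); this is the same idea, so the arguments coincide in substance.
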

\begin{proof}
  \ref{prop:paths-limited:i}~Assume the contrary, \ie $\rel(P,v) > 2n\al$ and $\rel(P,w) > 2n\al$.
  Then for each $x \in e \df \set{v,w}$ by \autoref{rem:bypass} and \autoref{prop:number-links}
	we have, using $\al \geq \al_0(n)$ for the second case (\ie when $\al \leq \frac{1}{2}$),
  \begin{equation*}
	I_x(G) - I_x(G+e) > 
	\frac{2n\al}{m+1} \geq
	\begin{cases}
	  \frac{2n\al}{2n} = \al \\
	  \frac{2n\al}{\frac{n}{\sqrt{2\al}} + 2} \geq \al
	\end{cases}\comma
  \end{equation*}
  a contradiction to PS.
  \par
  \ref{prop:paths-limited:ii}~Let $P=(v,\hdots,w)$ be inward-limited with $\lev(v)<\lev(w)$
  and of length~$\ell$.
  Let $N \df n - n'$ with $n'$ the number of vertices in $P$'s branch;
  then by construction $N \geq \frac{n}{2}$.
  It follows $2n\al\geq \rel(P,w) \geq N \ell \geq \frac{n}{2} \ell$, hence $4\al \geq \ell$.
  (This type of argument is the basis for $\diam(\tG(S)) = O(\al)$ 
  for ULF in \cite[Lem.~8.13]{Kli11a}.)
  \par
  \ref{prop:paths-limited:iii}~Assuming the contrary, let $(v,\hdots,w)$ and $(x,\hdots,y)$
  with $\lev(v)<\lev(w)$ and $\lev(x) < \lev(y)$ from different branches and both not inward-limited.
  Then $(w,\hdots,v,\hdots,u,\hdots,x,\hdots,y)$ is not $w$-limited nor $y$-limited,
  contradicting~\ref{prop:paths-limited:i}.
\end{proof}
\begin{thmlem}
  \label{lem:sep-bound}
  In a PS graph with $\al\geq\al_0(n)$, total separation is bounded by  $2\,(2+c)\,n^2 \al$.
\end{thmlem}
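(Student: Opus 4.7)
The plan is to reduce bounding $\sum_{e\in E}\sep(e)$ to bounding $\Rel(u)$, where $u$ is the root of the rooted bridge tree defined just before \autoref{prop:paths-limited}. Since $\sep(e)=2\nn(e)(n-\nn(e))\leq 2n\,\nn(e)$, and the orientation rule together with the rooting at $u$ forces $\nn(e)=\ones{B_e}$ where $B_e$ is the subtree hanging below $e$, I get $\sum_e\sep(e)\leq 2n\sum_e\ones{B_e}$. A standard double-count turns $\sum_e\ones{B_e}$ into $\sum_{K\in\tV}\ones{K}\,\lev(K)$, which equals $\Rel(u)$ because $\rel(e,u)=\ones{B_e}$ for every bridge $e$. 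So it suffices to show $\Rel(u)\leq(2+c)\,n\al$.

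Next I would split $\Rel(u)$ by branches. Let $B^\ast$ be the at most one branch containing a non-inward-limited path (\autoref{prop:paths-limited}\,(iii)); if none exists the argument only becomes easier. For every other branch $B$, the unique path from $u$ to any $K\in B$ is a branch path and is inward-limited, so by \autoref{prop:paths-limited}\,(ii) it has length at most $c\al$, giving $\lev(K)\leq c\al$; these branches together contribute at most $c\al(n-\ones{B^\ast})$ to $\Rel(u)$. Inside $B^\ast$ pick a longest branch path $P^\ast=(u,v_1,\dots,v_d)$, so $d$ is the depth of $B^\ast$. If $P^\ast$ is itself inward-limited, \autoref{prop:paths-limited}\,(ii) already gives $d\leq c\al$ and we are done; so assume $P^\ast$ is outward-limited, which by \autoref{prop:paths-limited}\,(i) is the only remaining possibility. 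Writing $n_i$ for the size of the side-tree $T_i$ hanging off $v_i$ (the subtree at $v_i$ with the subtree at $v_{i+1}$ removed), outward-limitedness of $P^\ast$ rearranges, via $\ones{B_{e_j}}=\sum_{i\geq j}n_i$, to $\sum_i i\,n_i\leq 2n\al$.

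The key claim is that $\lev(K)\leq i+c\al$ for every $K\in T_i$. To prove it I apply \autoref{prop:paths-limited}\,(i) to the ``L-shaped'' path $Q'=(K,\dots,v_i,v_{i+1},\dots,v_d)$: it is a path in $\tG$ though not itself a branch path because it changes direction at $v_i$, so part~(i), which applies to arbitrary paths, is exactly what is needed. Since every subtree below a non-root vertex has size at most $n/2$ by the orientation, one reads off $\rel(Q',K)\geq\delta\cdot n/2$, where $\delta$ is the depth of $K$ in $T_i$, and $\rel(Q',v_d)\geq(d-i)\cdot n/2$. Hence $Q'$ being $K$-limited forces $\delta\leq c\al$, while $Q'$ being $v_d$-limited forces $d-i\leq c\al$ and then $\delta\leq d-i\leq c\al$ since $v_d$ is a deepest vertex of $B^\ast$; either way $\lev(K)=i+\delta\leq i+c\al$. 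Summing gives a contribution of $B^\ast$ to $\Rel(u)$ of at most $\sum_i n_i(i+c\al)\leq 2n\al+c\al\,\ones{B^\ast}$, which combined with the non-$B^\ast$ bound yields $\Rel(u)\leq 2n\al+c\al\,n=(2+c)\,n\al$, so $\sum_e\sep(e)\leq 2(2+c)\,n^2\al$ as claimed. I expect the main obstacle to be precisely the L-shaped-path step, and this is why \autoref{prop:paths-limited}\,(i) was stated for arbitrary paths in the bridge tree rather than only for branch paths.
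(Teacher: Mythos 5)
Your proof is correct, but it takes a genuinely different route through the final accounting than the paper does. The paper picks a maximum-length path $P=(u,\hdots,w)$ from the root, uses the maximality of $P$ (splitting it into an upper and a lower part) to show every path attaching to $P$, and every branch path elsewhere, is inward-limited and hence of length at most $c\al$, bounds $\sum_{e\in E(P)}\sep(e)\leq 4n^2\al$ directly from outward-limitedness as in \eqref{eqn:sep-long-path}, and then disposes of all remaining links by a bypass trick: adding the link $\{u,w\}$ leaves their separations unchanged while shrinking the bridge-tree diameter to $2c\al$, so \autoref{lem:diam-bound} applies to them. You avoid both the bypass and any invocation of \autoref{lem:diam-bound}: you convert total separation into the root's relevance via $\sep(e)\leq 2n\,\nn(e)=2n\,\rel(e,u)$ and $\Rel(u)=\sum_{K}\ones{K}\lev(K)$, and then bound levels -- at most $c\al$ outside the one special branch, and at most $i+c\al$ on the side-tree $T_i$ inside it, the latter by applying \autoref{prop:paths-limited}\ref{prop:paths-limited:i} to the L-shaped path $(K,\hdots,v_i,\hdots,v_d)$ together with the fact that every edge above a non-root vertex has at least $\frac{n}{2}$ vertices on its root side; your inequality $\sum_i i\,n_i\leq 2n\al$ is the same use of outward-limitedness that drives \eqref{eqn:sep-long-path}. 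The trade-off: your argument needs only that $v_d$ is a deepest vertex of its branch (not that $P$ is a longest path overall), but it leans on the fact that part \ref{prop:paths-limited:i} holds for arbitrary paths, not just branch paths -- which is indeed how that part is stated and proved, so the step is legitimate; the identification $\nn(e)=\ones{B_e}$ and the treatment of the component $u$ as a player are likewise covered by the root's defining property and the paper's bridge-tree conventions. Both routes yield exactly the same constant, $2\,(2+c)\,n^2\al$.
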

\begin{proof}
  Let $P=(u,\hdots,w)$ be a path of maximum length $\ell$ among all path starting at $u$.
  If $\ell \leq c\al$, we are done by \autoref{lem:diam-bound}.
  Otherwise, if $\ell > c\al$, by \autoref{prop:paths-limited}\ref{prop:paths-limited:ii},
  $P$ is not inward-limited.
  By \autoref{prop:paths-limited}\ref{prop:paths-limited:iii},
  branch paths in all other branches are inward-limited and hence of length at most $c\al$.
  \par
  We consider paths in $P$'s branch of the form $Q=(x,\hdots,y)$ with $x \in V(P)$ and $y\neq w$ a leaf.
  Denote $k\df c\al + 1$ if $c\al$ is integer and $k \df \sceil{c\al}$ otherwise.
  Call the final $k$ vertices of $P$ (before it ends with $w$) its \term{lower part}
  and the other vertices (starting with $u$) its \term{upper part}.
  If $Q$ attaches in the lower part of $P$, \ie $x$ is in the lower part,
  then $\len{Q} \leq k-1 \leq c\al$ since $P$ has maximum length.
  If, on the other hand, $Q$ attaches to the upper part of $P$,
  then $A \df (w,\hdots,x)$ has length at least $k > c \al$.
  Denote $B \df (w,\hdots,x,\hdots,y)$.
  If $Q$ is not inward-limited, then $B$ is not $y$-limited, 
  by \autoref{prop:paths-limited}\ref{prop:paths-limited:i} it is hence $w$-limited.
  In particular, $A$ is $w$-limited, hence inward-limited 
  and thus of length at most $c\al$, a contradiction.
  It follows that $Q$ is inward-limited and so $\len{Q} \leq c\al$.
  \par
  We bound sums of separation values.
  Since $P$ is not inward-limited, it is outward-limited, \ie $u$-limited.
  Denote $P=(u=w_0,\hdots,w_\ell=w)$.
  It follows 
  \begin{equation}
    \label{eqn:sep-long-path}
    \begin{split}
      \sum_{e\in E(P)} \sep(e) 
      & = 2 \sum_{e=\set{w_i,w_{i+1}}\in E(P)} \rel(e,w_i) \, \rel(e,{w_{i+1}}) \\
      & \leq 2 \sum_{e=\set{w_i,w_{i+1}}\in E(P)} \rel(e,u) \, n
      = 2 n \, \rel(P,u) \leq 4 n^2 \al \period
    \end{split}
  \end{equation}
  If we bypass $P$, \ie add $\set{u,w}$,
  then separation values for all $e \in E' \df E \setminus E(P)$ are maintained.
  Hence we can compute $\sum_{e \in E'} \sep(e)$ as if $P$ had been bypassed.
  But with the bypass, the diameter of the bridge tree is $2 c \al$ by what we have established above.
  By \autoref{lem:diam-bound} it follows $\sum_{e \in E'} \sep(e) \leq n^2 \cdot 2c\al$.
  Together with \eqref{eqn:sep-long-path}, the bound on total separation follows.
\end{proof}
\begin{thmthm}
  \label{thm:O1}
  Price of anarchy for the simple-minded adversary with BLF and PNE or PS is 
  $1 + (2+c) \, (1+o(1)) = 1 + 6 \cdot (1+o(1)) = O(1)$ if $\al > \frac{1}{2}$
  and 
  $\max\Set{1,\,\frac{1}{\sqrt{8\al}}} + (2+c) \, (1+o(1)) = O\parens{\frac{1}{\sqrt{\al}}}$ if $\al_0(n) \leq \al \leq \frac{1}{2}$,
  where $o(1)$ is for $n \rightarrow \infty$.
\end{thmthm}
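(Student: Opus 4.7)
The plan is to combine the tools already assembled. Since \autoref{lem:simple-minded-convex} shows that cost is convex, PNE and PS coincide under the simple-minded adversary, so it suffices to bound the PoA over PS strategy profiles.

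Let $S$ be a PS strategy profile. From~\eqref{eqn:sc},
\begin{equation*}
  \SC(S) = 2m\al + \frac{1}{m}\sum_{e\in E}\sep(e).
\end{equation*}
In both regimes covered by the theorem we have $\al < n-1$ for $n \geq 9$, so by \autoref{sec:opt-eq} the cycle is optimal and $\OPT = 2n\al$. Hence
\begin{equation*}
  \frac{\SC(S)}{\OPT} = \frac{m}{n} + \frac{1}{2nm\al}\sum_{e\in E}\sep(e).
\end{equation*}

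For the disconnection-cost term I would substitute $\sum_{e\in E}\sep(e) \leq 2(2+c)n^2\al$ from \autoref{lem:sep-bound} and use $m \geq n-1$ to obtain a contribution of at most $(2+c)\cdot\frac{n}{m} \leq (2+c)(1+o(1))$, which accounts for the additive $(2+c)(1+o(1))$ summand appearing in both stated expressions.

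For the building-cost term $m/n$ I would case-split via \autoref{prop:number-links}. When $\al > \frac{1}{2}$, part~(i) gives $m < 2n$ and hence an $O(1)$ contribution, yielding the overall $O(1)$ bound. When $\al_0(n) \leq \al \leq \frac{1}{2}$, part~(ii) produces two sub-cases: either the graph is chord-free (again $m < 2n$, so $O(1)$) or $m \leq \frac{n}{\sqrt{2\al}} + 1$, in which case $\frac{m}{n} = O(1/\sqrt{\al})$. Taking the maximum over the sub-cases and adding the disconnection term yields the $\max\Set{1,\,1/\sqrt{8\al}} + (2+c)(1+o(1))$ expression.

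There is no conceptual obstacle remaining: the proof is a straightforward arithmetic combination of \autoref{prop:number-links} and \autoref{lem:sep-bound} together with $\OPT = 2n\al$. The only real care needed is in the constant-level bookkeeping to match the explicit expressions written in the theorem statement, and in verifying that the assumption $\al<n-1$ underlying $\OPT = 2n\al$ holds throughout both $\al$-regimes for $n\geq 9$.
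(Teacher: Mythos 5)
Your overall route is the same as the paper's: split social cost into building cost and disconnection cost, bound the number of links via \autoref{prop:number-links}, bound total separation via \autoref{lem:sep-bound}, use $m\geq n-1$, and divide by (roughly) $2n\alpha$; the preliminary remark that convexity lets you work with PS graphs is fine (and not even needed for an upper bound, since every PNE is PS). The one step that is actually wrong is the claim that in both regimes $\alpha<n-1$, so that the cycle is optimal and $\OPT=2n\alpha$: the first regime is all $\alpha>\frac{1}{2}$ with no upper bound, and for $\alpha\geq n-1$ the optimum is the star with social cost $2\,(n-1)\,(\alpha+1)$, which is strictly below $2n\alpha$ once $\alpha>n-1$. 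Since you need a \emph{lower} bound on $\OPT$ to upper-bound the price of anarchy, you cannot simply put $2n\alpha$ in the denominator there; replace it by the bound $\OPT\geq 2\,(n-1)\,\alpha$, valid for every $\alpha$, which only costs a further $(1+o(1))$ factor. This is in spirit what the paper does, quoting its lower bound on optimal social cost rather than an exact formula for $\OPT$.

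A second, smaller point: the ``constant-level bookkeeping'' you defer does not come out in the form stated in the theorem. With BLF total building cost $2m\alpha$ and $m<2n$ from \autoref{prop:number-links}\,(i), your building term is $\frac{m}{n}<2$, and in the sub-case $m\leq\frac{n}{\sqrt{2\alpha}}+1$ it is $\frac{1}{\sqrt{2\alpha}}+o(1)$; this suffices for the $O(1)$ and $O\parens{\frac{1}{\sqrt{\alpha}}}$ claims, but it does not reproduce the literal expressions $1+(2+c)(1+o(1))$ and $\max\set{1,\frac{1}{\sqrt{8\alpha}}}+(2+c)(1+o(1))$ (the paper's own proof charges the building-cost ratio as if total building cost were $m\alpha$). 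Your disconnection-cost computation, giving $(2+c)\frac{n}{m}\leq(2+c)(1+o(1))$, matches the paper's exactly.
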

\begin{proof}
  Recall the lower bound of $2n\al$ for an optimum.
  By \autoref{prop:number-links},
  for total building cost we have the ratio of equilibrium to optimum bounded by
  $\frac{2n\al}{2n\al} = 1$ if $\al > \frac{1}{2}$
  and otherwise, if $\al \leq \frac{1}{2}$, the same bound or:
  \begin{equation*}
    \frac{\frac{n}{\sqrt{2\al}}\al+\al}{2n\al} = \frac{1}{\sqrt{8\al}} + o(1) \period
  \end{equation*}
  For total disconnection cost, using $m \geq n-1$, we have the bound:
  \begin{equation*}
    \frac{\frac{2\,(2+c)\,n^2\al}{m}}{2n\al} 
    = \frac{(2+c)\,n}{m}
    \leq (2+c) \, \frac{n}{n-1}
    = (2+c) \, (1+o(1)) \period\qedhere
  \end{equation*}
\end{proof}

\section{Smart Adversary}
Recall that the smart adversary destroys one link uniformly at random from
$\Emax = \setst{e\in E}{\sep(e)=\sep_{\max}}$.
We call the links in $\Emax$ the \term{critical} links.
\begin{thmrem}
  \label{rem:chord-free-smart}%
  A PS graph is chord-free for the smart adversary.
\end{thmrem}
\begin{proof}
  Removing a chord does not change the relevance of any link,
  nor does it change $\sep_{\max} \, (=\max_e \sep(e))$, hence it does not change $\Emax$.
  A player will hence always opt to remove a chord and so avoid the expense of $\al$.
\end{proof}
Given chord-freeness, we have the bound of $2n=O(n)$ on the links in a PS graph.
As seen by \eqref{eqn:rough},
this implies a bound of $O\parens{1+\frac{n}{\alpha}}$
on the price of anarchy.
This bound holds for any adversary provided that $m=O(n)$ in all equilibria.
It is tight for the smart adversary and BLF as shown by the following example.
\begin{SCfigure}
  \caption{%
  \label{fig:three-stars}%
  Three stars of sizes $n_0$, $n_0-1$, and $n_0-2$; here $n_0=5$.
  The $n_0$ players in the star around $u_1$ would like to put 
  the one critical link $\set{u_0,u_1}$ on a cycle,
  if $\alpha<n_0$.
  Building, \eg $\set{u_1,u_2}$ would reduce their disconnection cost from $n-n_0$ to $n_0-2$,
  meaning an improvement of $n_0$.
  But no player from the stars around $u_2$ or $u_3$ is willing to cooperate.}
  \begin{tikzpicture}[%
    every node/.style=player,%
    node distance=1cm,%
    ]
    \node (u0) {$u_0$};
    \node (u1) [above right = 2.5cm of u0] {$u_1$};
    \node (u11) [below = of u1] { };
    \node (u12) [right = of u1] { };
    \node (u13) [above = of u1] { };
    \node (u14) [left = of u1] { };
    \node (u2) [left = 2cm of u0] {$u_2$};
    \node (u21) [below = of u2] { };
    \node (u22) [left = of u2] { };
    \node (u23) [above = of u2] { };
    \node (u3) [below = 2cm of u0] {$u_3$};
    \node (u31) [below right = of u3] { };
    \node (u32) [below left = of u3] { };
    \path (u0) edge[dashed] (u1)
    (u0) edge         (u2)
    (u0) edge         (u3);
    \path (u1) edge (u11)
    (u1) edge (u12)
    (u1) edge (u13)
    (u1) edge (u14);
    \path (u2) edge (u21)
    (u2) edge (u22)
    (u2) edge (u23);
    \path (u3) edge (u31)
    (u3) edge (u32);
  \end{tikzpicture}
\end{SCfigure}
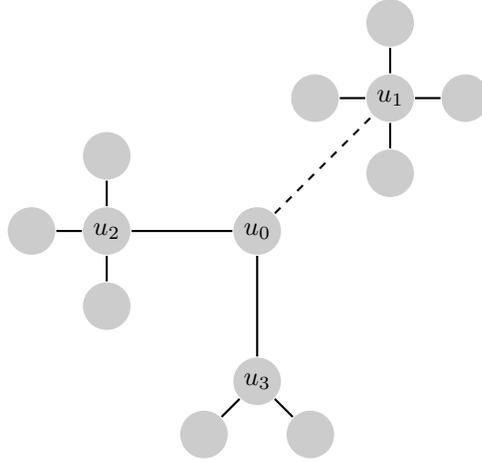%
\begin{thmthm}
  \label{thm:bilateral-smart-lower}%
  Let $n \geq 9$ and $\al > 2$, 
  then the price of anarchy for PNE and PS with the smart adversary 
  is $\Omega(1 + \frac{n}{\alpha})$.
\end{thmthm}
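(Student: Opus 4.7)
The plan is to exhibit an explicit PNE whose social cost matches the $O(1+\frac{n}{\alpha})$ upper bound that \eqref{eqn:rough} combined with \autoref{rem:chord-free-smart} and \autoref{prop:chord-free-general} implies. The witness is the \emph{tripod} drawn in the figure above: writing $n = 3n_0 - 2$ with $n_0 \geq 4$ (a cosmetic adjustment of leaf counts handles the two remaining residues modulo $3$), a center $u_0$ is adjacent to star centers $u_1, u_2, u_3$ that carry $n_0-1$, $n_0-2$, $n_0-3$ further leaves respectively. The three bridges incident to $u_0$ have separations
\begin{equation*}
  \sep(\{u_0,u_1\}) = 2n_0(2n_0-2),\quad \sep(\{u_0,u_2\}) = 2(n_0-1)(2n_0-1),\quad \sep(\{u_0,u_3\}) = 4n_0(n_0-2),
\end{equation*}
and $\{u_0,u_1\}$ is the strict maximum for $n_0 \geq 4$, while every leaf edge has the much smaller separation $2(n-1)$. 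Therefore $\Emax = \{\{u_0,u_1\}\}$ and the smart adversary destroys $\{u_0,u_1\}$ with probability $1$, contributing $2n_0(2n_0-2) = \Theta(n^2)$ to the social cost. Combined with total building cost $2(n-1)\alpha$ and the lower bound $\OPT = \Omega(n\alpha)$ from \autoref{sec:opt-eq}, this yields $\PoA = \Omega(1+\frac{n}{\alpha})$.

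The real content is to show the tripod is a PNE. The NE property is trivial because the graph is a tree: any nonempty subset of edges removed by a single player disconnects the graph and gives infinite disconnection cost. So only the pairwise-addition clause \eqref{eqn:pne} needs attention, and the guiding observation is that adding an edge benefits a player's disconnection cost \emph{only} if the edge places the unique critical link $\{u_0,u_1\}$ on a new cycle. Let $V_i$ denote the vertex set of the $u_i$-star (so $u_i \in V_i$). We classify a candidate non-edge $\{v,w\} \in \nE(G)$ by the location of its endpoints.

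If both endpoints lie in $V_1$, or both lie in $\{u_0\} \cup V_2 \cup V_3$, then the cycle created by adding $\{v,w\}$ avoids $\{u_0,u_1\}$; the critical link and its separation are unchanged, so no player's disconnection cost moves and the $+\alpha$ building cost is a pure loss for both $v$ and $w$. The remaining case is $v \in V_1$ and $w \notin V_1$. Here the new cycle does contain $\{u_0,u_1\}$, which becomes a non-bridge; the new critical link is whichever of $\{u_0,u_2\}$ and $\{u_0,u_3\}$ remains a bridge with larger separation (if $w = u_0$ both remain bridges, otherwise one of them is also broken). In either sub-case, $w$'s disconnection cost before the addition was $n_0$ (she lost all of $V_1$ when $\{u_0,u_1\}$ failed), and after the addition it is either $n_0-1$ or $n_0-2$ (she loses $V_2$ or $V_3$). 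Hence $w$'s disconnection cost decreases by at most $2$, and since $\alpha > 2$ her total cost strictly increases; $w$ vetoes the addition. This exhausts the cases and shows the tripod is a PNE, and hence PS. The hypothesis $\alpha > 2$ is used precisely in this final step, reflecting the worst bilateral deviation available to a small-star player.
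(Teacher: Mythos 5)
Your proposal is correct and follows essentially the same route as the paper: the identical three-star construction joined through $u_0$ with $\{u_0,u_1\}$ as the unique critical link, the observation that selling any link of the tree disconnects it, and the use of $\al>2$ so that the players in the smaller stars veto any edge that would put $\{u_0,u_1\}$ on a cycle (their disconnection cost drops by at most $2$). Your explicit separation computations and the case split by endpoint location just spell out what the paper leaves implicit.
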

\begin{proof}%\enlargethispage{2.5\baselineskip}
  Assume there is an integer $n_0\geq 3$ such that $n=3n_0-2$;
  the general case can be proved in a similar way.
  Consider three stars $U_i$, $i=1,2,3$ with center vertices $u_i$, $i=1,2,3$,
  and $n_0$, $n_0-1$, and $n_0-2$ vertices, respectively.
  Connect the stars via an additional vertex $u_0$ 
  and additional links $\set{u_0,u_i}$, $i=1,2,3$.
  This construction uses $3n_0-2$ vertices.
  % See \autoref{fig:three-stars} 
  % \vpageref{fig:three-stars} 
  % for an illustration.
  Then $e_0=\set{u_0,u_1}$ is the only critical link
  and $n_0=\Theta(n)$,
  namely slightly more than $\frac{1}{3}n$.
  We have a total disconnection cost of $2 \nn(e_0) \, (n-\nn(e_0))
  = 2 n_0 \, (n-n_0) = \Omega(n^2)$,
	and so a social cost of $\Omega(n\al + n^2)$.
  It follows a ratio to the optimum of $\Omega(1+\frac{n}{\alpha})$.
  We are left to show that
  this graph is a PNE,
  which implies PS.
  Clearly, no link can be sold,
  since that would make the graph disconnected.
  Therefore we only have to ensure 
  that no link can be added that would be beneficial for one endpoint
  and at least no impairment for the other one,
  \ie we have to show~\eqref{eqn:pne}.
  \par
  A link $e$ that improves disconnection cost for some player 
  has to put $\set{u_0,u_1}$ on a cycle.
  If $e$ connects a vertex in $U_1$ with a vertex in $U_2$,
  then $\set{u_0,u_3}$ will become critical.
  For a vertex in $U_2$,
  this reduces disconnection cost from $n_0$ to $n_0-2$.
  So, since $\alpha> 2$, no vertex in $U_2$ agrees to build such a link.
  \par
  A similar situation holds if $e$ connects a vertex in $U_1$ with a vertex in $U_3+u_0$.
  It will result in $\set{u_0,u_2}$ becoming critical.
  For a vertex in $U_3+u_0$,
  this reduces disconnection cost from $n_0$ to $n_0-1$.
  So, since $\alpha> 1$, no vertex in $U_3+u_0$ agrees to build such a link.
%   \par\enlargethispage{1.4\baselineskip}
%   If $n+2$ is not a multiple of $3$, we can do a similar construction.
%   We let $n_0$ be as large as possible so that $n\geq3n_0-2$
%   and do the same construction as above.
%   The remaining $1$ or $2$ vertices are connected directly to $u_0$.
%   Then the previous arguments essentially carry over.
\end{proof}
\par
If we consider $\alpha>2$ a constant, this theorem gives a lower bound of $\Omega(n)$.
For $\alpha=\Omega(1)$ this is by~\eqref{eqn:rough}
the worst that can happen for \emphasis{any} adversary in this model (note $m \leq n^2$).
This is particularly noteworthy since for ULF and NE,
the smart adversary has a price of anarchy of $O(1)$~\cite[Thm.~9.8]{Kli11a}.

%\par\pagebreak\noindent\enlargethispage{1.5\baselineskip}%
\section*{Acknowledgement} 
I thank Mourad El Ouali for proofreading.
I thank the anonymous referees for helpful comments
and pointing out an error for the case $\al \leq \frac{1}{2}$
in the submitted version of this work.
I also thank the organizers of SAGT 2013.

\pagebreak

\newpage
\appendix
{\noindent\Large\textbf{Appendix}}

\section{Optima and Equilibria}
\label{app:opt-eq}
For optima, we repeat a proof from~\cite[Prop.~7,1]{Kli11a} for ULF with minimal modifications,
which adapt it to BLF.
This result is for a general adversary,
\ie one that destroys one link according to some probability measure $\P$
on the links of the built graph.
\begin{thmprop}
  \label{prop:uni-opt}%
  An optimum has social cost $\Om(n\alpha)$, and $\Theta(n\alpha)$ if $\al=\Om(1)$.
  More precisely:
  \begin{enumerate}[label=(\roman*)]
    \item If $\alpha\leq n-1$, the cycle is an optimum;
      it has social cost $2n\alpha$.
    \item If $\alpha\geq n-1$, a star is an optimum;
      it has social cost $2 \, (n-1)\,(\alpha+1)$.
  \end{enumerate}
\end{thmprop}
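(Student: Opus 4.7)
The plan is to compute $\SC$ for the cycle and the star directly, and then obtain a matching lower bound by splitting on the number of links $m$ of an arbitrary connected graph $G$. Recall that any optimum must be connected (so $m \geq n-1$), that under BLF the social cost is $\SC(G) = 2m\al + \sum_{e\in E}\sep(e)\,\Pr{\set{e}}$, and that the adversary's measure always satisfies $\sum_{e\in E}\Pr{\set{e}} = 1$.

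For the two candidates the computation is immediate. The cycle $C_n$ has $m=n$ and no bridges, so $\sep(e)=0$ for every $e$, hence $\SC(C_n)=2n\al$. A star has $m=n-1$; every link is a bridge separating one leaf from the remaining $n-1$ vertices, so $\sep(e) = 2\cdot 1\cdot(n-1) = 2(n-1)$ for every edge. The indirect cost therefore equals $2(n-1)$ regardless of which adversary is chosen, and the star has $\SC = 2(n-1)(\al+1)$.

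For the lower bound, let $G$ be any connected graph with $m$ links. If $m \geq n$, then building cost alone gives $\SC(G) \geq 2n\al$, matching the cycle; and one checks $2n\al \geq 2(n-1)(\al+1) \iff \al \geq n-1$, so this case is also dominated in the star-regime. If $m = n-1$ then $G$ is a spanning tree, every edge is a bridge with $\sep(e) = 2\nn(e)(n-\nn(e))$, and since $\nn(e) \geq 1$ while the map $x \mapsto x(n-x)$ on $\{1,\ldots,\lfloor n/2\rfloor\}$ attains its minimum at $x=1$, we obtain $\sep(e) \geq 2(n-1)$ uniformly. Summing against the adversary's measure yields indirect cost at least $2(n-1)$, so $\SC(G) \geq 2(n-1)(\al+1)$; and $2(n-1)(\al+1) \geq 2n\al \iff \al \leq n-1$, so this case is dominated in the cycle-regime. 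Combining the two cases with the two candidates in their respective regimes proves both parts, and the summary $\Om(n\al)$ (and $\Theta(n\al)$ whenever $\al=\Om(1)$) falls out of the explicit values $2n\al$ and $2(n-1)(\al+1)$.

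The only mildly delicate ingredient is the uniform bound $\sep(e) \geq 2(n-1)$ on tree edges; it matters that this is exactly $2(n-1)$ rather than merely $\Om(n)$, so that the tree lower bound meets the star's cost precisely at the crossover $\al = n-1$. I expect this, together with the clean rearrangement of the crossover inequalities in both regimes, to be the main (and minor) obstacle in what is otherwise a one-line case analysis.
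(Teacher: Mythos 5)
Your proposal is correct and follows essentially the same route as the paper: split connected graphs into those containing a cycle (building cost alone already $\geq 2n\al$) and trees, lower-bound tree disconnection cost by $2(n-1)$ via $\sep(e)=2\nn(e)(n-\nn(e))\geq 2(n-1)$, and compare the two bounds at the crossover $\al=n-1$. The only cosmetic difference is that you phrase the case split as $m\geq n$ versus $m=n-1$ and check both regimes symmetrically, whereas the paper argues "optimum is the cycle or a tree" and then compares the cycle against the tree lower bound; the content is identical.
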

\begin{proof}
  An optimum can only be the cycle or a tree,
  because any graph containing a cycle has already the building cost $n\alpha$ of the cycle,
  and the cycle has optimal disconnection cost.
  So an optimum is either the cycle, or it is cycle-free.
  Let $T$ be any tree.
  We have its disconnection cost:
  \begin{align*}
    \MoveEqLeft
    \sum_{e\in E(T)} \sep(e) \, \Pr{\set{e}}
    =2\sum_{e\in E(T)}\nn(e)\,(n-\nn(e)) \, \Pr{\set{e}}\\
    & \geq2\cdot1\,(n-1)\sum_{e\in E(T)}\Pr{\set{e}}
    =2\,(n-1)\period
  \end{align*}
  We use that the function $x\mapsto2x\,(n-x)$ is (strictly) increasing 
  on $[1,\frac{n}{2}]$ for the lower bound.
  We conclude that the social cost of a tree is at least
  \begin{equation}
    \label{eqn:lower-bound-tree}
    2 \, (n-1)\,\alpha+2\,(n-1)= 2 \, (n-1)\,(\alpha+1)\period
  \end{equation}
  Social cost of the cycle is $2n\alpha$.
  So if $\alpha\leq n-1$, the cycle is better or as good as any tree,
  hence it is an optimum.
  If $\alpha > n-1$, then we look for a good tree.
  A~star has social cost $2 \, (n-1)\,(\alpha+1)$, 
  which matches the lower bound~\eqref{eqn:lower-bound-tree}
  and is hence optimal (and better than the cycle).
\end{proof}
Note that, indeed, for $\al=n-1$, cycle and star co-exist as optima with social cost 
$2n\al=2n\,(n-1)=2\,(\al+1)\,(n-1)$.
\par
For equilibria, we also build on the results from~\cite{Kli11a}.
We call a NE $S$ a \term{maximal Nash equilibrium} (MaxNE)
if $C_v(S+vw_1+\hdots+vw_k) > C_v(S)$
for all $k\in\setn{n}$ and $\set{v,w_1},\hdots,\set{v,w_k}\in \nE(S)$ and $v\in V$.
That is, we exclude the possibility 
that a player can buy additional links so
that the gain in her indirect cost and the additional building cost
nullify each other.
We make the convention that whenever we speak of NE or MaxNE, we mean that relative to ULF.
Whenever we speak of PNE or PS, we mean that relative to BLF.
\par
We also need the notion of \term{anonymous cost}.
We call indirect cost \term{anonymous} if
for each built graph $G=(V,E)$
and each graph automorphism $\fn{\phi}{V\map V}$ of $G$,
we have $I_v(G)=I_{\phi(v)}(G)$ for all $v\in V$.
In other words, anonymity of indirect cost means that
$I_v(G)$ does not depend on $v$'s identity,
but only on $v$'s position in the built graph $G$.
This is of importance in particular if $G$ has symmetry.
For instance, if $G$ is a cycle and indirect cost is anonymous,
then all vertices experience the same indirect cost.
If $G$ is a path and indirect cost is anonymous,
then both endpoints experience the same indirect cost.
It is easy to see that both our adversaries, the simple-minded and the smart one,
induce anonymous cost.
\par
For ULF, we cite the following two propositions from~\cite{Kli11a} without proofs.
Strategy profiles are described by directed graphs.
This is straightforward: a strategy profile $S\in\OI^{n \times n}$
can be interpreted as the adjacency matrix for a directed graph on $V=\setn{n}$,
and vice versa.
A directed link $(v,w)$ in that graph means a request by $v$ for the link $\set{v,w}$.
Under ULF, such a request results in the link being built,
and then we call $v$ the \term{owner} of the link.
Since we restrict to essential strategy profiles, 
under ULF there is always exactly one owner per link.
\begin{thmprop}[\protect{\cite[Prop.~7.3]{Kli11a}}]
  \label{prop:uni-ne-star}%
  Let $S$ be a star
  with links pointing outward 
  (\ie the built graph is a star and all links are paid for by the center vertex).
  \begin{enumerate}[label=(\roman*)]
    \item\label{prop:uni-ne-star:i}%
      If $\alpha\geq n-1$, then $S$ is a NE.
    \item\label{prop:uni-ne-star:ii}%
      If $\alpha\geq 2-\frac{1}{n-1}$, 
      then $S$ is a NE if disconnection cost is anonymous.
  \end{enumerate}
  In both cases, strict inequality implies a MaxNE.
\end{thmprop}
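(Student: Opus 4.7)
The plan is to verify that no unilateral deviation by any player strictly decreases her cost (for NE), and under the strict form of the hypothesis, that every graph-changing deviation strictly increases it (for MaxNE). I split by role. The center $c$ owns every link of $S$, so an essential deviation by $c$ must drop at least one of them; since every link of a star is a bridge, this disconnects $G(S)$ and yields infinite indirect cost, while non-essential deviations merely waste building cost. Hence $c$ has no profitable move in either part.

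For a leaf $v$, who owns no link, any deviation $X\in\OI^n$ can only \emph{add} edges: the new graph is $G(S)$ together with the links $\set{v,u_i}$ for those leaves $u_i$ with $X_{u_i}=1$ (setting $X_c=1$ leaves the graph unchanged while wasting $\alpha$). So any graph-changing deviation adds $k\geq 1$ leaf-to-leaf links at a building cost of at least $k\alpha\geq\alpha$, whereas the saving in indirect cost is at most the original value $I_v(S)$, because indirect cost is nonnegative. It therefore suffices to prove $\alpha\geq I_v(S)$ (respectively $\alpha>I_v(S)$) to obtain NE (respectively MaxNE). For part~(i) the crude bound $\rel(e,v)\leq n-1$ gives $I_v(S)=\sum_{e\in E}\rel(e,v)\,\Pr{\set{e}}\leq n-1$, which finishes the argument.

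For part~(ii) I would compute $I_v(S)$ exactly under anonymity. Enumerating the links of the star, $\set{c,v}$ contributes relevance $n-1$ to $v$ while every other link $\set{c,w}$ contributes relevance $1$, giving $I_v(S)=(n-2)\,\Pr{\set{c,v}}+1$ after using $\sum_w\Pr{\set{c,w}}=1$. Anonymity demands $I_v(S)=I_{v'}(S)$ for any two leaves $v,v'$ since every permutation of leaves is a graph automorphism of the star, and this forces all masses $\Pr{\set{c,w}}$ on leaf links to coincide and hence each to equal $1/(n-1)$; therefore $I_v(S)=(n-2)/(n-1)+1=2-1/(n-1)$, and the hypothesis $\alpha\geq 2-1/(n-1)$ finishes the NE argument. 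The main subtlety is this passage from anonymity of indirect cost to uniformity of the adversary's leaf probabilities; once that is in hand, everything else reduces to routine arithmetic combined with the nonnegativity-based bound on savings that already handled part~(i).
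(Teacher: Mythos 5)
Your argument is correct, and note that this paper does not prove the proposition itself but cites it from \cite{Kli11a}; your proof is the natural one and matches the standard argument for that result. The key steps all check out: the center owns only bridges, so every graph-changing deviation by her yields infinite disconnection cost; a leaf $v$ can only add links, paying at least $\alpha$ while saving at most $I_v(S)$, so $\alpha\geq I_v(S)$ (strictly, for MaxNE) suffices; and $I_v(S)\leq n-1$ for any adversary, while under anonymity the leaf-swapping automorphisms give equal indirect costs, which via the affine formula $I_v(S)=(n-2)\cdot p_v+1$ (with $p_v$ the probability of the link $\{c,v\}$, and $n\geq 3$) forces uniform probabilities and hence $I_v(S)=2-\frac{1}{n-1}$ exactly.
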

\begin{thmprop}[\protect{\cite[Prop.~7.4]{Kli11a}}]
  \label{prop:uni-ne-cycle}%
  Let $S$ be a cycle
  with all links pointing in the same direction
  (either all clockwise or all counter-clockwise).
  \begin{enumerate}[label=(\roman*)]
    \item\label{prop:uni-ne-cycle:i}%
      If $\alpha\leq 1$, then $S$ is a MaxNE.
    \item\label{prop:uni-ne-cycle:ii}%
      If $\alpha\leq \frac{1}{2}\sfloor{\frac{n-1}{2}}$, 
      then $S$ is a MaxNE if disconnection cost is anonymous.
  \end{enumerate}
\end{thmprop}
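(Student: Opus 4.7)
\par\smallskip
\noindent\textbf{Proof proposal.} The plan is to verify, for both parts, that no deviation of a single player $v = v_i$ from the cycle $S$ strictly decreases her cost and that any add-only deviation strictly increases it. In $S$ the player $v$ owns exactly one outgoing link, so her building cost is $\al$; since every edge of a cycle lies on a cycle and hence is no bridge, her indirect cost is zero and $C_v(S)=\al$.

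I would partition a candidate deviation by its new strategy $X \in \OI^n$ into three cases: (a)~$X$ retains $v$'s outgoing link, so $\ones{X}\geq 1$; (b)~$X$ drops that link but contains at least one new outgoing request; (c)~$X=\mathbf{0}$. In cases~(a) and~(b) the new building cost is at least $\al$ and indirect cost is nonnegative, so the total cost is at least $\al$, ruling out strict improvement. In case~(c) the built graph becomes the Hamiltonian path obtained by deleting $v$'s outgoing cycle edge, with $v$ sitting at one endpoint. The bound on $\al$ is only needed to handle case~(c).

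For part~(i), I would use a trivial per-edge bound on the path: every link is a bridge with $\rel(e,v)\geq 1$ for the endpoint~$v$, hence the new indirect cost is at least $\sum_e 1\cdot \Pr{\set{e}} = 1$, and combined with $\al\leq 1$ this keeps the post-deviation cost at least $\al$. For part~(ii) I would upgrade this bound using anonymity: the reflection exchanging the two endpoints is an automorphism of the path, so anonymity forces $I_v = I_{v'}$ where $v'$ is the opposite endpoint. Every path edge separates $v$ from $v'$, so $\rel(e,v)+\rel(e,v') = n$ for each edge $e$; summing against $\P$ yields $I_v+I_{v'} = n$, whence $I_v = \frac{n}{2}$. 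Since $\frac{1}{2}\sfloor{\frac{n-1}{2}} \leq \frac{n}{2}$, the stated bound on $\al$ implies $\al \leq I_v$, so case~(c) again yields no strict improvement.

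The MaxNE condition restricts to add-only deviations, which coincide with case~(a) under $\ones{X}\geq 2$: here the building cost jumps to $\ones{X}\al \geq 2\al > \al$ while the indirect cost remains zero (adding chords to a cycle creates no bridges), giving a strict increase for any $\al>0$ and establishing MaxNE in both parts. The main obstacle is the part~(ii) anonymity step, where the trivial per-edge lower bound of~$1$ is inadequate and one must combine the endpoint-swap automorphism of the path with the separation identity $\rel(e,v)+\rel(e,v')=n$ to upgrade the bound to~$\frac{n}{2}$; the case analysis in~(a), (b) and the MaxNE part is then essentially bookkeeping.
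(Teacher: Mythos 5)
The paper itself does not prove this proposition --- it is imported verbatim from \cite[Prop.~7.4]{Kli11a} with the explicit remark that the two ULF propositions are cited \emph{without proofs} --- so there is no in-paper argument to compare against; I can only assess your proof on its own terms, and it is correct. The case split is exhaustive: under ULF the links owned by the other $n-1$ players already form the Hamiltonian path ending at $v$, any deviation $X$ with $\ones{X}\geq 1$ has cost at least $\ones{X}\,\al + 0 \geq \al = C_v(S)$, so only $X=\mathbf{0}$ needs the hypothesis on $\al$; and the add-only MaxNE condition is strict for every $\al>0$ because chords keep the cycle bridgeless while building cost rises to at least $2\al$. The part~(i) bound $I_v \geq \sum_{e} 1\cdot\Pr{\set{e}} = 1$ is sound since every link of the path is a bridge with positive relevance for the endpoint, and the part~(ii) computation is the right use of anonymity: the reflection of the path swaps the endpoints, so $I_v = I_{v'}$, and $I_v + I_{v'} = \sum_{e}\parens{\rel(e,v)+\rel(e,v')}\Pr{\set{e}} = n$ because every path edge separates the two endpoints. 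One remark: your argument in part~(ii) actually yields the conclusion under the weaker hypothesis $\al \leq \frac{n}{2}$, of which the stated threshold $\frac{1}{2}\sfloor{\frac{n-1}{2}} \leq \frac{n-1}{4}$ is a special case; the cited source evidently works with a less tight lower bound on the endpoint's disconnection cost of the path, but this only means you have proved slightly more than was asked, not less.
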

\begin{thmcor}
  \label{cor:uni-ne}%
  For anonymous disconnection cost,
  MaxNE exist for all ranges of $\alpha>0$ provided that $n\geq 9$.
\end{thmcor}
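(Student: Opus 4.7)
The plan is a straightforward patching argument. Under the assumption of anonymous disconnection cost, \autoref{prop:uni-ne-star} and \autoref{prop:uni-ne-cycle} provide two canonical MaxNE constructions --- the outward-oriented star and the consistently oriented cycle --- each valid on an explicit range of the link cost $\alpha$. All that remains for the corollary is to check that, for $n\geq 9$, the union of these two ranges covers all of $(0,\infty)$.

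Concretely, I would first invoke \autoref{prop:uni-ne-star}\ref{prop:uni-ne-star:ii} to obtain the star as a MaxNE for $\alpha > 2 - \frac{1}{n-1}$, using the clause that strict inequality upgrades NE to MaxNE. Second, I would invoke \autoref{prop:uni-ne-cycle}\ref{prop:uni-ne-cycle:ii} to obtain the cycle as a MaxNE for $\alpha \leq \frac{1}{2}\sfloor{\frac{n-1}{2}}$. The remainder then reduces to the numerical fact
\[
  \frac{1}{2}\sfloor{\tfrac{n-1}{2}} \;\geq\; 2 \;>\; 2 - \frac{1}{n-1}\comma
\]
which is equivalent to $\sfloor{(n-1)/2}\geq 4$, that is, to $n\geq 9$, and which forces every $\alpha\in(0,\infty)$ to lie in at least one of the two regimes (with an interval of overlap, though overlap is not required here).

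I expect no genuine obstacle beyond this arithmetic: the MaxNE witnesses themselves have already been constructed and verified in the cited propositions, so there is no new game-theoretic content. The one subtlety worth flagging is the strictness in the star clause --- MaxNE demands $\alpha$ \emph{strictly} above $2 - \frac{1}{n-1}$ --- but the strict gap $2 > 2 - \frac{1}{n-1}$ absorbs it, since a boundary value $\alpha = 2 - \frac{1}{n-1}$ still lies comfortably inside the cycle regime $\alpha \leq 2$. The same computation also shows why the hypothesis $n\geq 9$ is sharp for this approach: for $n=8$ one would have $\frac{1}{2}\sfloor{\frac{n-1}{2}} = \frac{3}{2}$, leaving a small interval of $\alpha$-values uncovered by either construction, so $n\leq 8$ would demand a different argument.
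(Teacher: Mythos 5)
Your proposal is correct and is essentially the paper's own argument: the paper likewise just observes that $2-\frac{1}{n-1} < 2 \leq \frac{1}{2}\sfloor{\frac{n-1}{2}}$ for $n\geq 9$ and invokes \autoref{prop:uni-ne-star} and \autoref{prop:uni-ne-cycle}. Your extra remarks on the strictness needed for the star MaxNE and on why $n=8$ fails are accurate but not needed beyond what the paper states.
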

\begin{proof}
  Just note that $2-\frac{1}{n-1} < 2\leq\frac{1}{2}\sfloor{\frac{n-1}{2}}$ for $n\geq 9$
  and invoke the preceding propositions.
\end{proof}
\par
Certain classes of simple-structured MaxNE under ULF are PNE under~BLF.
The following two propositions hold in a more general setting,
not limited to the adversary model.
Let $S$ be a strategy profile.
Define $S^\Bi$ by
$S^\Bi_{vw} \df \min\set{1,\, S_{vw}+S_{wv}}$ for all $v,w\in V$.
Then $G^\Uni(S)=G^\Bi(S^\Bi)$.
In other words, forming $S^\Bi$ means adding to $S$
the necessary requests so that for BLF the same graph is built as we have for ULF.
\begin{thmprop}
  Let $S$ be a MaxNE under ULF with $G\df G^\Uni(S)$ being a cycle
  and using anonymous indirect cost. 
  Then $S^\Bi$ is a PNE under~BLF.
\end{thmprop}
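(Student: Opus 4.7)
The plan is to verify for $S^\Bi$ under BLF both the Nash and the pairwise parts of the PNE definition, using that $G^\Bi(S^\Bi)=G^\Uni(S)=G$ is the cycle, anonymity of indirect cost, and the ULF MaxNE property of $S$.

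For the Nash part I would fix any player $v$ and any unilateral deviation $X \in \OI^n$. Since under BLF $v$ cannot add any link alone, the resulting graph differs from $G$ only by deleting some subset of her two incident cycle edges (extra $1$s in $X$ only increase building cost), leaving three nontrivial cases. When exactly one incident edge is deleted, $v$ saves $\al$ but the cycle becomes a path on which $v$ is an endpoint; the ULF NE condition applied to the owner of that edge in $S$ yields an $\al$-gap between that owner's indirect cost on the path and on the cycle, and I would transport this gap to $v$ using anonymity (the cycle is vertex-transitive, and the resulting path has a reflection symmetry swapping its two endpoints, so the gap applies to $v$ as well). When both incident edges are deleted, $v$ becomes isolated, the built graph is disconnected, and $I_v = \infty$ under the standing convention of the indirect-cost models considered here, so this deviation is manifestly bad.

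For the pairwise part~\eqref{eqn:pne} I would take any chord $\{v,w\} \in \nE(S)$ and show that under BLF, adding it strictly harms \emph{both} endpoints, which makes the implication vacuously true. The BLF cost change at $v$ from $S^\Bi$ to $S^\Bi+vw+wv$ equals $\al - [I_v(G) - I_v(G+\{v,w\})]$, which is exactly the ULF cost change that $v$ would incur by unilaterally adding $\{v,w\}$ to her strategy. The ULF MaxNE assumption — crucially its \emph{strict} inequality, not mere NE — gives $\al > I_v(G) - I_v(G+\{v,w\})$. The symmetric argument with $v$ and $w$ exchanged (both are admissible since $\{v,w\} \in \nE(S)$) closes this part.

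The main obstacle is the ``delete both incident edges'' deviation, which has no direct counterpart in ULF since there each player owns only one of her two incident cycle edges and thus cannot destroy the other by a unilateral change. Without an extra assumption one cannot reduce this case to the single-edge ULF NE inequalities; I rely on the framework convention that disconnected graphs have infinite indirect cost, which holds for the adversary model considered in this paper. Everything else amounts to translating single-edge ULF strategy changes to their BLF counterparts, using that adding or removing one link produces the same cost change in both regimes (each endpoint pays $\al$ in BLF just as the owner pays $\al$ in ULF).
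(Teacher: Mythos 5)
Your proof is correct and follows essentially the same route as the paper's: the strict MaxNE inequality makes the premise of \eqref{eqn:pne} false for every absent link, and the ULF owner's Nash inequality for selling a cycle edge is transported to the non-owning endpoint via anonymity (vertex-transitivity of the cycle and the endpoint-swapping reflection of the resulting path $G-e$). You are in fact slightly more careful than the paper, whose proof silently skips the deviation in which a player drops both of her incident requests; your treatment of that case via the convention $I_v=\infty$ on disconnected graphs is exactly what is needed in the adversary model, where the proposition is applied.
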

\begin{proof}
  By the definition of MaxNE,
  any additional link is an impairment for the buyer.
  So the premise of~\eqref{eqn:pne} %\vpageref{eqn:pne}
  is never true, \ie all absent links are justified.
  \par
  New links cannot be formed unilaterally.
  We are hence left to show that each link is wanted by both endpoints,
  \ie none of the endpoints can improve her individual cost by deleting the link.
  Let $v$ be the owner of a link $\set{v,w}$ under ULF.
  Since we have a NE there, 
  $v$ cannot improve her individual cost by selling this link.
  Selling the link means that $v$ would be at the end of the path $G-\set{v,w}$.
  By anonymity of indirect cost we conclude:
  it is worth or at least no impairment paying $\alpha$ for not being at the end of the path
  that results from $G$ by deletion of one link.
  Therefore, both of each two neighboring vertices 
  maintain their requests in $S^\Bi$ for having a link between them.
\end{proof}
\begin{thmprop}
  Let $S$ be a MaxNE under ULF with $G\df G^\Uni(S)$ being a tree.
  Let indirect cost assign $\infty$ to a disconnected graph.
  Then $S^\Bi$ is a PNE under~BLF.
\end{thmprop}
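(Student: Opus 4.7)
The plan is to verify the two defining conditions of a PNE under BLF: that $S^\Bi$ is itself a NE under BLF, and that the pairwise condition~\eqref{eqn:pne} holds at every missing link. Both follow from two elementary observations: removing any link from the tree $G$ disconnects it (raising indirect cost to~$\infty$), and adding $\set{v,w}$ changes $v$'s cost by exactly the same amount in the ULF deviation $S+vw$ (where $v$ alone buys) as in the BLF deviation $S^\Bi+vw+wv$ (where $v$ and $w$ cooperatively buy).

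For the NE property, I would examine a unilateral deviation by a player $v$ from $S^\Bi$ to some $(S^\Bi_{-v},X)$. Wherever $X$ drops a request $S^\Bi_{vw}=1$ down to~$0$ for a present link $\set{v,w}\in E$, that link is removed under BLF, disconnecting the tree and sending $I_v$ to~$\infty$. Wherever $X$ raises a request $S^\Bi_{vw}=0$ to~$1$ for a missing link, the entry $S^\Bi_{wv}=0$ is untouched so no new link is built, while $v$'s building cost grows by~$\alpha$. Any combination of such modifications weakly increases $v$'s cost, so $S^\Bi$ is a NE under BLF.

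For condition~\eqref{eqn:pne}, I would fix a missing link $\set{v,w}\in\nE(S^\Bi)=\nE(S)$. The graph built under $S^\Bi+vw+wv$ is $G+\set{v,w}$, which is also the graph built under the ULF deviation $S+vw$ (since $\set{v,w}\notin E$ means $v$ was not already requesting it in $S$). Hence $v$'s indirect-cost change is the same in both scenarios (indirect cost depends only on the built graph), and her building-cost change is exactly $+\alpha$ in both. Thus
\[
C_v(S^\Bi+vw+wv)-C_v(S^\Bi) \;=\; C_v(S+vw)-C_v(S) \;>\; 0,
\]
where the strict inequality is the MaxNE property of $S$ applied to the single link $\set{v,w}$. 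In particular $C_v(S^\Bi+vw+wv)>C_v(S^\Bi)$, so the antecedent of~\eqref{eqn:pne} is false and the implication holds vacuously. There is no real obstacle in the argument: the tree structure makes NE immediate, and the direct identification between the unilateral ULF deviation and the cooperative BLF deviation transfers MaxNE's strict impairment inequality verbatim to the BLF setting.
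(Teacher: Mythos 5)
Your proof is correct and follows essentially the same route as the paper: condition~\eqref{eqn:pne} holds vacuously because the MaxNE property (with $k=1$) makes any single added link a strict impairment for the buyer, and no unilateral deviation helps since removing any tree link disconnects the graph and drives indirect cost to $\infty$. Your explicit identification $C_v(S^\Bi+vw+wv)-C_v(S^\Bi)=C_v(S+vw)-C_v(S)$ just spells out the ULF-to-BLF transfer that the paper leaves implicit by referring to the preceding proposition.
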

\begin{proof}
  As in the previous proposition,
  \eqref{eqn:pne} follows from the properties
  of a MaxNE.
  So we are left to consider removals.
  Since the final graph is a tree,
  removal of any link would make it disconnected, yielding indirect cost $\infty$.
  Hence no player wishes to remove a link.
\end{proof}
\par
We turn to the adversary model.
It follows from the two previous propositions that the equilibrium existence results from
\autoref{prop:uni-ne-star}, \autoref{prop:uni-ne-cycle} and \autoref{cor:uni-ne}
carry over from MaxNE to PNE. We so have:
\begin{thmcor}
  For anonymous disconnection cost (which includes the simple-minded and the smart adversary)
  PNE, and hence also PS graphs, exist under BLF for $\alpha>0$ and $n\geq 9$.
\end{thmcor}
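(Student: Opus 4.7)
The plan is to assemble the preceding ingredients: the MaxNE existence result under ULF (\autoref{cor:uni-ne}) together with the two bridging propositions that lift MaxNE on a cycle or tree under ULF to PNE under BLF. Since the corollary's hypotheses match those in \autoref{cor:uni-ne} (anonymous disconnection cost, $n \geq 9$, any $\alpha > 0$), I can invoke that result directly to obtain, for every $\alpha > 0$, a MaxNE $S$ under ULF whose built graph is either a star (when $\alpha \geq 2 - \frac{1}{n-1}$) or a cycle (when $\alpha \leq \frac{1}{2}\lfloor\frac{n-1}{2}\rfloor$). The arithmetic check $2 - \frac{1}{n-1} < \frac{1}{2}\lfloor\frac{n-1}{2}\rfloor$ for $n \geq 9$, performed already in the proof of \autoref{cor:uni-ne}, guarantees that these two ranges jointly cover all $\alpha > 0$.

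Given such a MaxNE $S$, I form $S^\Bi$ as defined just before the two bridging propositions, so that $G^\Bi(S^\Bi) = G^\Uni(S)$. In the star case the built graph is a tree, so the tree-bridging proposition applies and yields that $S^\Bi$ is a PNE under BLF. In the cycle case the cycle-bridging proposition applies, using anonymity of disconnection cost, again yielding that $S^\Bi$ is a PNE under BLF. In either case the existence of a PNE is established, and since every PNE is PS (as noted in \autoref{sec:framework}), PS graphs also exist.

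There is no real obstacle here: the work has been done in the two bridging propositions (which convert unilateral maximality and anonymity into the absent-link condition~\eqref{eqn:pne} and into robustness against link removals under BLF) and in \autoref{cor:uni-ne} (which supplies the overlap of parameter ranges at $n \geq 9$). The proof is thus a one-line citation chain, and I would present it exactly that way: invoke \autoref{cor:uni-ne} to produce a star- or cycle-structured MaxNE, then apply the appropriate bridging proposition to conclude.
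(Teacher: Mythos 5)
Your proposal is correct and follows essentially the same route as the paper: it cites \autoref{cor:uni-ne} for MaxNE existence under ULF (cycle or star, with the ranges overlapping for $n\geq 9$) and then lifts these to PNE under BLF via the two bridging propositions, noting that PNE implies PS. The only nuance worth keeping in mind is that the tree-bridging proposition needs indirect cost $\infty$ on disconnected graphs, which the adversary model's disconnection cost indeed satisfies.
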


\section{Proof of \autoref{prop:becoming-bridges}}
\label{app:becoming-bridges}
\begin{thmrem}
  \label{rem:one-cycle}
  Let $G=(V,E)$ be a graph and $e=\set{v,w}\in E$ a non-bridge
  and $C$ \emphasis{any} cycle with $e\in E(C)$.
  Then all bridges of $G-e$ that are non-bridges in~$G$,
  are in~$E(C)$.
\end{thmrem}
\begin{proof}
  Let $f$ be a non-bridge in $G$ and a bridge in $G-e$.
  Then $G-e$ consists of two subgraphs $G_1$ and $G_2$ that are connected only by $f$.
  Since $f$ was no bridge before $e$ was removed,
  $e$ must also connect $G_1$ with $G_2$.
  Moreover, there are no other links between $G_1$ and $G_2$.
  It follows that any cycle that contains $e$ also contains $f$.
\end{proof}
\begin{proof}[Proof of \autoref{prop:becoming-bridges}]
  Since $G'$ is connected,
  there is a path $(v,e_1,v_1,\hdots,w)$ in~$G'$.
  Then the cycle $C\df(v,\hdots,w,e,v)$ is in $G-F$.
  By \autoref{rem:one-cycle}, we have $B_1\subseteq E(C)$.
  Hence all links in $B_1$ are on a cycle that is not destroyed by removal of $F$,
  so no link in $B_1$ is made a bridge by removal of $F$.
  It follows $B_1\subseteq B_2$.
\end{proof}

\section{Illustration for the Bridge Tree}
\label{app:bridge-tree}
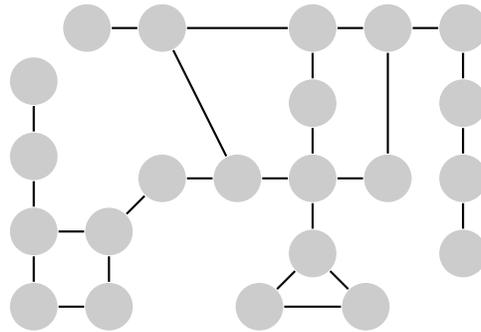
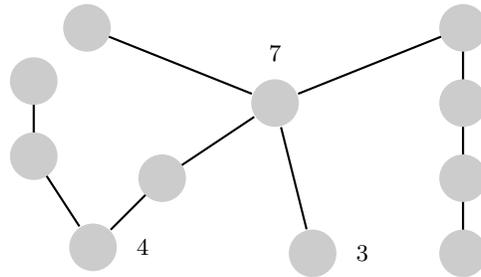
\begin{figure}[H]
\newcommand{\skel}{%
	\node (a1) { };
	\node (a2) [below = of a1] { };
	\node (a3) [left = of a2] { };
	\node (a4) [above = of a3] { };
	\node (b1) [above right = of a1] { };
	\node (b2) [above = 2cm of b1] { };
	\node (b3) [right = 2cm of b2] { };
	\node (b4) [right = of b3] { };
	\node (b5) [below = 2cm of b4] { };
	\node (b6) [left = of b5] { };
	\node (b7) [left = of b6] { };
	\node (b8) [below = of b3] { };
	\node (c1) [left = of b2] { };
	\node (d1) [above = of a4] { };
	\node (d2) [above = of d1] { };
	\node (e1) [below = of b6] { };
	\node (e2) [below left = of e1] { };
	\node (e3) [below right = of e1] { };
	\node (f1) [right = of b4] { };
	\node (f2) [below = of f1] { };
	\node (f3) [below = of f2] { };
	\node (f4) [below = of f3] { };}
\centering
\subfloat[]%
[\label{fig:bridge-tree:1}A graph $G$.]%
{\begin{tikzpicture}[every node/.style=player,node distance=1cm]
	\skel
	\path (a1) edge (a2) (a2) edge (a3) (a3) edge (a4) (a4) edge (a1);
	\path (a1) edge (b1);
	\path (b2) edge (b3)
		(b3) edge (b4)
		(b4) edge (b5)
		(b5) edge (b6)
		(b6) edge (b7)
		(b7) edge (b1);
	\path (b7) edge (b2);
	\path (b3) edge (b8) (b8) edge (b6);
	\path (b2) edge (c1);
	\path (d1) edge (d2);
	\path (b6) edge (e1);
	\path (e1) edge (e2) (e2) edge (e3) (e3) edge (e1);
	\path (a4) edge (d1);
	\path (b4) edge (f1);
	\path (f1) edge (f2) 
		(f2) edge (f3)
		(f3) edge (f4);
\end{tikzpicture}}
\hfill
\subfloat[]%
[\label{fig:bridge-tree:2}The corresponding bridge tree $\tG$.]%
{\begin{tikzpicture}[node distance=1cm]
	\skel
	\begin{scope}[every node/.style=player]
	\node (B) [above right = 1cm and 1.5 cm of b1, label=above:{$7$}] {};
	\node (E) [below = of b6, label=right:{$3$}] {};
	\node (A) [below left = 1.3 cm of b1, label=right:{$4$}] {};
	\end{scope}
	\path (A) edge (d1) (d1) edge (d2);
	\path (A) edge (b1) (b1) edge (B);
	\path (B) edge (c1);
	\path (B) edge (E);
	\path (B) edge (f1)
		(f1) edge (f2)
		(f2) edge (f3)
		(f3) edge (f4);
	\begin{scope}[every node/.style=player]
	\node at (b1) { };
	\node at (c1) { };
	\node at (d1) { };
	\node at (d2) { };
	\node at (f1) { };
	\node at (f2) { };
	\node at (f3) { };
	\node at (f4) { };
	\end{scope}
\end{tikzpicture}}
\caption{%
	\label{fig:bridge-tree}%
	Bridge tree construction.
	Vertices representing components of more than $1$ vertices 
	have their number of vertices attached, here $4$, $7$, and $3$,
	respectively.}
\end{figure}%

\newpage
\section{Lower Bound on the Diameter}
\label{app:example-diameter}
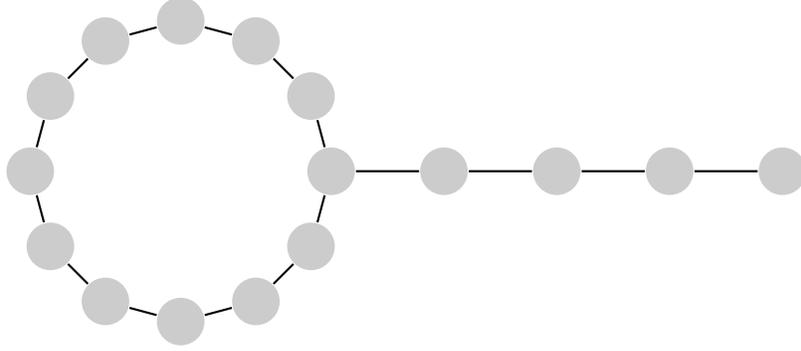
\begin{figure}
  \centering
  \begin{tikzpicture}[%
    every node/.style=player,%
    node distance=1.5cm,%
    ]
    \foreach \i in {0,30,...,330}{
    \node (v\i) at (\i:2cm) { };
    };
    \foreach \i in {0,30,...,300}{
    \pgfmathtruncatemacro{\next}{\i+30}
    \path (v\i) edge (v\next);
    };
    \path (v330) edge (v0);
    \node (r1) [right = of v0] { };
    \node (r2) [right = of r1] { };
    \node (r3) [right = of r2] { };
    \node (r4) [right = of r3] { };
    \path (v0) edge (r1) (r1) edge (r2) (r2) edge (r3) (r3) edge (r4);
  \end{tikzpicture}
  \caption{%
  \label{fig:bilateral-cycle-path}%
  Cycle with path attached, here $n=16$ and $\ell=4$.}
\end{figure}%
\begin{thmprop}
  For $\alpha=1$, the bridge tree can have diameter $\Omega(\sqrt{n})$,
  even if the graph is PS.
\end{thmprop}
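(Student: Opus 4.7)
The plan is to analyze the construction in Figure~\ref{fig:bilateral-cycle-path}: a cycle of $k \df n - \ell$ vertices with an additional path $c_1, u_1, u_2, \hdots, u_\ell$ of $\ell$ extra vertices attached at one cycle vertex $c_1$, where we take $\ell \df \sfloor{\sqrt{n}}$. Then $G$ has $n$ vertices and $m = n$ links, and the bridge tree structure is immediate: the cycle is bridgeless and contracts to one component, each $u_j$ is its own component, so $\tG$ is a path on $\ell + 1$ vertices, giving $\diam(\tG) = \ell = \Om(\sqrt{n})$.

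It remains to verify that $G$ is PS (under the simple-minded adversary at $\al = 1$). For the no-removal condition, each path link is a bridge, so removing one disconnects $G$ and yields disconnection cost $\infty$; removing a cycle link turns $G$ into a path on $n$ vertices with the remover at one of its endpoints, so her indirect cost jumps to $\Om(n)$, easily dominating the $\al = 1$ saved.

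For additions I verify~\eqref{eqn:pne} by a three-case analysis showing that at least one endpoint of each missing link $\{v, w\} \not\in E$ is strictly harmed. (i)~For a chord of the cycle, relevances are unchanged and only $m$ grows, so the benefit for each endpoint is $\Rel(\cdot)/(m(m+1)) = O(\ell^2/n^2) \ll \al$. (ii)~For a link between a cycle vertex $c$ and a path vertex, $\Rel(c) \leq \ell(\ell+1)/2$ since all its contributions come from the $\ell$ path bridges; even if the addition turns every path bridge into a non-bridge, $c$'s benefit is at most $\Rel(c)/m \leq \ell(\ell+1)/(2n) < 1 = \al$. (iii)~For a path chord $\{u_i, u_j\}$ with $i < j$, only the intermediate links $e_i, \hdots, e_{j-1}$ become non-bridges, while $e_0, \hdots, e_{i-1}$ remain bridges for the inner endpoint $u_i$; using $\Rel(u_i) = ik + i(i-1)/2 + (\ell-i)(\ell-i+1)/2$, the reduction in $\Rel(u_i)$ is at most $(\ell-i)(\ell-i+1)/2 \leq \ell^2/2$, which translates into benefit at most $\ell^2/(2n) + O(1/\sqrt{n}) < \al$.

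The main obstacle lies in case~(iii): both endpoints of a path chord can be far from the cycle, and one might fear that the inner endpoint profits enough to make the chord jointly beneficial. The crucial observation is that $u_i$ keeps all its bridges $e_0, \hdots, e_{i-1}$ towards the cycle intact, so only the suffix portion $(\ell - i)(\ell - i + 1)/2 \leq \ell^2/2 \leq n/2$ of its relevance is affected, and after dividing by $m + 1 \approx n$ the resulting benefit stays strictly below $\al = 1$. With $\ell = \sfloor{\sqrt{n}}$, all PS conditions hold for $n$ sufficiently large, completing the construction.
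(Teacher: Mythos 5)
Your construction and parameters are exactly those of the paper's own proof (the cycle with a pendant path of length $\ell=\lfloor\sqrt{n}\rfloor$ and $\alpha=1$, which lies inside the paper's admissible window $\frac{\ell(\ell+1)}{2n}\le\alpha\le\frac{(n-\ell-1)(n-\ell)}{2n}$), and your three-case verification of \eqref{eqn:pne} together with the removal check simply spells out the stability argument the paper states as ``easily shown,'' so this is correct and essentially the same approach. One cosmetic slip: deleting a cycle link does not leave a path on $n$ vertices but a tree consisting of the broken cycle with the pendant path still attached at the attachment vertex; nevertheless the seller sits at the end of a path of $n-\ell$ vertices, so her disconnection cost is still $\Omega(n)$ and your conclusion stands.
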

\begin{proof}
  Consider a cycle with a path of length $\ell$ attached to it with one of its ends,
  as shown in \autoref{fig:bilateral-cycle-path}. % \vpageref{fig:bilateral-cycle-path}.
  Diameter of the corresponding bridge tree is $\ell$.
  Let $n$ be the total number of vertices and
  let $\frac{1}{n} \frac{\ell \, (\ell+1)}{2}\leq 
  \alpha 
  \leq \frac{1}{n} \frac{((n-\ell)-1) \, (n-\ell)}{2}$;
  such an $\alpha$ exists if $n\geq 3\ell$.
  Because of the lower bound on $\alpha$,
  no vertex on the cycle wishes to connect to a vertex on the path,
  and also no vertex on the path wishes to connect to a vertex that is 
  located away from the cycle.
  Because of the upper bound on $\alpha$,
  it can also be shown easily that no two neighboring vertices 
  on the cycle wish to sell the link between them.
  Trivially, no link on the path will be sold.
  Hence this graph is PS.
  However, we can choose $n\geq 9$, $\ell\df\sfloor{\sqrt{n}}$, and $\alpha\df 1$,
  and so have a diameter of $\Omega(\sqrt{n})$ in the corresponding bridge tree.
\end{proof}
\begin{thmrem}
  In accordance with \autoref{thm:O1}, this example does not provide
  a price of anarchy beyond~$\Theta(1)$.
\end{thmrem}
\begin{proof}
  Social cost is 
  \begin{equation*}
    2n\al + \frac{1}{n}\,2\,\sum_{k=1}^\ell (n-k) \, k
    = 2n\al + \frac{\ell\,(\ell+1)}{n} \parens{n-\frac{2\ell+1}{3}}\enspace.
  \end{equation*}
  For a lower bound on the price of anarchy,
  we have to divide this by the optimum, which is $2n\al$ since $\al\leq n-1$.
  So we choose $\alpha$ as small as possible, \ie $\alpha=\frac{1}{n}\frac{\ell\,(\ell+1)}{2}$,
	hence $2n\al = \ell\,(\ell+1)$.
  It follows a lower bound on the price of anarchy of only:
  \begin{equation*}
    1 + \frac{1}{n} \parens{n-\frac{2\ell+1}{3}}
    = 1 + \parens{1-\frac{2\ell+1}{3n}} \leq 2 = \Theta(1)\period\qedhere
  \end{equation*}
\end{proof}

\end{document}